\newcommand*{\permcomb}[4][0mu]{{{}^{#3}\mkern#1#2_{#4}}}
\newcommand*{\perm}[1][-3mu]{\permcomb[#1]{P}}
\newtheorem{theorem}{Theorem}
\newtheorem{definition}{Definition}
\newtheorem{lemma}{Lemma}
\newtheorem*{problem}{Optimization Problem}
\newtheorem*{intermediate}{Intermediate Step}
\declaretheoremstyle[headfont=\normalfont]{normalhead}
\newtheorem*{result}{Result}
\def\@ythm#1#2#3[#4]{\def\@currentlabelname{#4}%
  \expandafter\global\expandafter\def\csname#1name\endcsname{#4}%
  \@opargbegintheorem{#3}{\csname the#2\endcsname}{#4}%
  \ifx\thm@starredenv\@undefined
    \thm@thmcaption{#1}{{#3}{\csname the#2\endcsname}{#4}}\fi
  \ignorespaces}
\newcommand{\eq}[1]{Eq.~\eqref{#1}}
\newcommand{\oldtext}[1]{}
\newcommand{\added}[1]{\textcolor[rgb]{0.00,0.00,0.00}{{#1}}}
\newcommand{\myitem}[1]{\vspace{0.25\baselineskip}\noindent\textbf{#1}}
\newcommand{\secref}[1]{Section~\ref{#1}}
\begin{document}


\title{Network Friendly Recommendations: Optimizing for Long Viewing Sessions} 

\author{Theodoros Giannakas,
Pavlos Sermpezis, and~Thrasyvoulos Spyropoulos
\IEEEcompsocitemizethanks{\IEEEcompsocthanksitem T. Giannakas, and T. Spyropoulos are with the Communication Systems
Department, EURECOM, France.\protect\\
E-mail: \{theodoros.giannakas, thrasyvoulos.spyropoulos\}@eurecom.fr
\IEEEcompsocthanksitem P. Sermpezis is with the Department of Informatics, Aristotle University of Thessaloniki, Greece.\protect\\
E-mail: sermpezis@csd.auth.gr}}


\IEEEtitleabstractindextext{%
\begin{abstract}
Caching algorithms try to predict content popularity, and place the content closer to the users. Additionally, nowadays requests are increasingly driven by recommendation systems (RS). These important trends, point to the following: \emph{make RSs favor locally cached content}, this way operators reduce network costs, and users get better streaming rates. Nevertheless, this process should preserve the quality of the recommendations (QoR). In this work, we propose a Markov Chain model for a stochastic, recommendation-driven \emph{sequence} of requests, and formulate the problem of selecting high quality recommendations that minimize the network cost \emph{in the long run}. While the original optimization problem is non-convex, it can be convexified through a series of transformations. Moreover, we extend our framework for users who show preference in some positions of the recommendations' list.
To our best knowledge, this is the first work to provide an optimal polynomial-time algorithm for these problems. Finally, testing our algorithms on real datasets suggests significant potential, e.g., $2\times$ improvement compared to baseline recommendations, and 80\% compared to a greedy network-friendly-RS (which optimizes the cost for I.I.D. requests), while preserving at least 90\% of the original QoR. Finally, we show that taking position preference into account leads to additional performance gains. 
\end{abstract}

\begin{IEEEkeywords}
recommendation systems, caching, modeling, optimization
\end{IEEEkeywords}}

\maketitle

\section{Introduction}
\label{sec:intro}
\subsection{Background}
The use of Content Distribution Networks (CDNs) has been common practice in the Internet~\cite{sitaraman2014}. At the same time, large content providers are starting to operate their own CDNs (e.g., Neflix Open Connect~\cite{netflixOpen}) by placing and operating smaller data centers inside a network operator, a trend that will continue in the context of mobile edge clouds. Interest in caching research has been revived in the context of Information-Centric Networks (ICNs), and more recently in wireless networks; there, a number of studies suggest to install tiny caches (e.g., hard drives) at every small-cell or femto-node~\cite{femto}, bringing ideas from hierarchical caching~\cite{borst2010} into the wireless domain. 

Caching techniques essentially try to predict what content users will probably request, and store it closer to the user. Storing content close to the users
, can (i) reduce the network cost to serve a request, and (ii) improve user experience (e.g., better playout quality). 
%
Nevertheless, the rapidly growing catalog sizes (both for professional and user-generated content), smaller sizes per cache (e.g., at femto-nodes
) compared to traditional CDNs, and volatility of user demand when considering smaller populations, make the task of caching algorithms increasingly challenging~\cite{Paschos-misconceptions,ElayoubiRoberts15}. 

To overcome such challenges, a radical approach has been recently proposed~\cite{chatzieleftheriou2019TMC,cache-centric-video-recommendation,giannakas2018show,munaro2015content,sermpezis2018soft,guo2017caching,liu2018learning}, based on the observation that user demand is increasingly driven today by recommendation systems (RSs) of popular applications (e.g., Netflix, YouTube). Instead of simply recommending \emph{interesting content}, recommendations could instead be ``nudged'' towards \emph{interesting content with low access cost} (e.g., locally cached)~\cite{chatzieleftheriou2019TMC,kastanakis-cabaret-mecomm}:
the recommendation quality remains high, and the new content will incur a smaller (network) cost, or even be accessible at better quality (e.g., HD), due to the lower latency~\cite{doan2018tracing}.
This approach is appealing, potentially presenting a win-win situation for all involved parties. It has, nevertheless, attracted some research interest only very recently, mostly in empirical studies~\cite{cache-centric-video-recommendation} or heuristic schemes~\cite{chatzieleftheriou2019TMC,giannakas2018show}.

\subsection{Motivation and Contributions}

\added{The main motivation for our paper, is that a recommendation \emph{we do now}, not only affects the user's next choice and the related network cost, but also subsequent choices and costs. However, the works in ~\cite{chatzieleftheriou2019TMC,cache-centric-video-recommendation,munaro2015content,sermpezis2018soft,guo2017caching,liu2018learning}, base their analysis on independent and identically distributed (I.I.D.) request patterns, ignoring the fact that a user's session often consists of consuming multiple contents in sequence (e.g., YouTube, Spotify). Thus, selecting recommendations towards network cost minimization for this sequential process is the main focus of our work.}

\added{We briefly present the problem our paper targets: A user starts a session in some multimedia (video, music, etc.) application, which is equipped with: (a) an RS that suggests a \say{related list} of $N$ recommended items; this list relates to the item \emph{just visited} (requested) by the user, and (b) a search bar that can be used for typing, and thus requesting \emph{any} content of the catalog. The user transits to the next content either from the RS list (potentially exhibiting some preference for the recommendations that are placed higher in the list), or from the search-bar.
Our goal is to design a methodology that returns the \emph{optimal} recommendation policy, which will simultaneously keep the user satisfied \emph{at every request} and minimize the total network cost incurred by her requests in this \emph{long session}.
}

Having established our goal, here we summarize the technical contributions of this paper:

\myitem{(i) Sequential request model.}
We propose an analytical framework based on absorbing Markov chain theory, to model a user accessing a long sequence of contents, driven by a RS (\secref{sec:problem_setup}, and \secref{sec:problem_formulation}). The sequential request model better fits real user behavior in a number of popular applications (e.g. YouTube, Vimeo, personalized radio) compared to IRM models used in previous work~\cite{chatzieleftheriou2019TMC,sermpezis2018soft}. 

\myitem{(ii) Problem formulation and convex equivalent.}
We formulate a generic optimization problem for high quality but network-friendly recommendations (we refer to these as ``Network-Friendly Recommendations'' or NFR). We show that this problem is non-convex, but we prove an equivalent convex one through a sequence of transformations (\secref{sec:optimization_methodology}). 


\myitem{(iii) Position Preference.}
We extend the established user model so that it takes into account the expressed user preference on some recommendation positions (\secref{sec:position}). We modify accordingly the optimization problem components, i.e., the variables and the constraints and show that the new one can also be transformed to a convex equivalent.

\myitem{(iv) Real-World Data Validation.}
We validate our algorithms using existing and collected datasets from different content catalogs, and demonstrate performance improvements up to $3 \times$ compared to baseline recommendations, and 80\% compared to a greedy cache-friendly recommender, for a scenario with 90\% of the original recommendation quality (\secref{sec:sims}). 

Finally, we discuss related work in \secref{sec:related} and present a set of open related problems in \secref{sec:discussion}.



\section{Problem Setup}
\label{sec:problem_setup}
\subsection{Problem Definitions}



We consider a user that consumes one or more contents during a session, drawn from a catalogue $\mathcal{K}$ of cardinality $K$.

\begin{definition}[Recommendation-Driven Requests]\label{def:requests} 
During the consumption of content $i \in \mathcal{K}$, a list of $N$ new contents are recommended to her, and she
\begin{itemize}[leftmargin=*,noitemsep,topsep=0pt]
\item follows recommendations with some fixed probability $\alpha\in(0,1)$ and picks uniformly among the $N$ contents. 
\item ignores the recommendations with probability $1-\alpha$, and picks a content $j$ (e.g., through a search bar) with probability $p_{0j} \in (0,1)$, $\mathbf{p}_{0} = [p_{01}, p_{02}, \dots, p_{0K}]^{T}$.
\end{itemize}
\end{definition}
\myitem{Assumptions on $\mathbf{p}_0$.}
For simplicity, we assume a type of time-scale separation is in place, where the probabilities $p_{0j}$ capture long-term user behavior (beyond one session).
W.l.o.g. we \emph{also} assume $\mathbf{p}_{0}$ governs the first content accessed, when a user starts a session. 

The above modeled session captures a number of everyday scenarios (e.g., watching clips on YouTube, personalized radio), where $\alpha$ captures the average probability of the user following recommendations (e.g., $\alpha = 0.5$ was measured for YouTube~\cite{RecImpact-IMC10}, and $0.8$ for Netflix~\cite{gomez2016netflix}, or $\alpha=1$ in the case of \textit{AutoPlay}). It is reported that YouTube users spend on average around 40 minutes at the service, viewing several related videos~\cite{businessYoutubeSessions}.

\myitem{Content Retrieval Cost.} We assume that fetching content $i$ is associated with a generic cost $c_{i}\in\mathbb{R}$, $\mathbf{c} = [c_{1}, c_{2},...,c_{K}]^{T}$, which is known to the content provider, and might depend on access latency, congestion overhead, popularity, file size, or even monetary cost. 

\noindent\emph{Minimizing cache misses:} Can be captured by setting $c_i = 0$ for all cached content and to $c_i = 1$, for non-cached content. 

\noindent\emph{Hierachical caching:} Can be captured by letting $c_i$ take values out of $m$ possible ones, corresponding to $m$ cache layers: higher values correspond to layers farther from the user~\cite{borst2010, poularakis2014toc}. 

\myitem{Remark:} While we have assumed, for simplicity, that these costs are associated with caching, this is not a requirement for our framework. Any network problem that gives us as input such cost values $c_i$ could be solved by the proposed approach. What is more, we are assuming that these costs (e.g. the contents cached) are fixed, at least during some time frame. This is inline with the standard femto-caching approach of ``cache today, consume tomorrow''~\cite{femto,wang2014cache,bastug2014living}, and the recent paradigm of \say{popular content prefetching} followed by Netflix~\cite{netflixOpen} and Google~\cite{googlepeer}. However, dynamic caching policies like LRU would require a different treatment (some details in Section~\ref{sec:discussion}). 



\begin{definition}[Matrix $\mathbf{U}$ - Content Relations]
For every pair of contents $i,j \in \mathcal{K}$, a score $u_{ij}\in[0,1]$ is calculated, using a state-of-the-art method.
\footnote{$u_{ij}$ could correspond to the \emph{cosine similarity} between content $i$ and $j$, in a collaborative filtering system~\cite{sarwar2001item}, or simply take values either $1$ (for a small number of related files) and $0$ (for unrelated ones). These scores might also depend on user preferences (e.g., past history).}
These values populate the square $K \times K$ matrix $\mathbf{U}$, which is assumed to be known to the RS.
\end{definition}

\begin{definition}[Control Variable $\mathbf{R}$]\label{def:control-variables}
Let $r_{ij} \in [0,1]$ denote the probability that content $j$ is recommended after a user watches content $i$. These probabilities define a square $K \times K$ recommendation matrix $\mathbf{R}$, over which we optimize.  
\end{definition}

\myitem{Baseline Recommendations.} Recommendation systems (RS) is an active area of research, with state-of-the-art RSs using collaborative filtering~\cite{sarwar2001item}, matrix factorization~\cite{koren2009matrix}, deep neural networks~\cite{covington2016deep} and recently Q-Learning~\cite{slateQ}.
%
For simplicity, we assume that the baseline RS works as follows:
\begin{definition}[Baseline Recommendations]\label{def:baseline-RS}
For every content $i \in \mathcal{K}$, the baseline RS at content $i$ will always recommend the $N$ items\footnote{$N$ depends on the scenario. E.g., in YouTube $N=1$ when \textit{autoplay} mode is on, $N=2,..,5$ in its mobile app, and $N=20$ in its website version.} with the highest $u_{ij}$ values~\cite{RecImpact-IMC10}. In other words, $\mathbf{r}_{i}^{base}$, the $i$-th row of $\mathbf{R}^{base}$, will be a vector of size $K$, indexed by $N$ 1's at the position of the highest $u_{ij}$. Thus, for every content $i$, the baseline RS achieves:
\begin{equation}\label{eq:qmax}
    q_{i}^{max} = \sum_{j=1}^K r_{ij}^{base} \cdot u_{ij}
\end{equation}
\end{definition}

Our goal is to design a policy $\mathbf{R}$, which is different from $\mathbf{R}^{base}$ (see Def.~\ref{def:baseline-RS}); $\mathbf{R}^{base}$ is based only on $\mathbf{U}$, and satisfies the users by offering $q^{max}$, whereas we are interested in designing an RS that considers (a) $\mathbf{U}$ and (b) access costs $\mathbf{c}_i$ of all contents available in the library, and satisfy also the network needs. As a warm-up, when we formulate our problem in the next section: we will be interested in minimizing a criterion that is based on the access cost, by guaranteeing some level of quality of recommendations (QoR) and treat it as a constraint.

\begin{definition}[Network-friendly RS] 
The $q$-Network-friendly RS is the one that achieves at least $q$, with $q\in [0, 1]$, of the $q_{i}^{max}$ for every content $i$.
Therefore, the set of $q$-Network-friendly RS, is the RSs that obey the following set of $K$ inequality constraints.

\begin{equation}\label{eq:quality}
\sum_{i=1}^{K} r_{ij} \cdot u_{ij} \ge q \cdot q_{i}^{max}, \forall i\in\mathcal{K}.
\end{equation}
\end{definition}

Our $q$-NFRS with policy $r_{ij}$, will guarantee the quality of recommendations (QoR) through the set of constraints Eq. (\ref{eq:quality}) (the achieved QoR is on the left handside of this expression), where $q$ -a tuning parameter of the RS- decides the percentage of the $q^{max}_{i}$ quality we offer. Importantly, when $q\to 0$, QoR is low and the RS recommends based only on the access cost (opportunity for large network gains), whereas if $q\to 1$, the RS becomes $\mathbf{R}^{base}$ (the optimization problem is \say{very} constrained) and the RS cannot improve network access cost. In this paper, we will focus on values of $q>70\%$ in order to capture interesting scenarios and see whether low network access cost can be achieved by keeping the users happy at the same time.

\myitem{Remark:}
\added{In addition to network delivery cost, network-friendly recommendations might also improve user QoE: for example, a locally cached content could be fetched more efficiently (lower latency, higher bandwidth, etc.) and streamed without interruptions in High Definition, an obvious ``win-win'' situation for the network operator and the users. Recent experimental studies provide evidence and quantify such 
QoE improvements~\cite{doan2018tracing,sermpezis2019towards}. In this context, optimization-wise, there are other interesting choices for jointly modeling the QoR and QoE. For instance, a way to capture user satisfaction (in this paper captured only through QoR), would be to add a second constraint that relates only to QoE-related metrics~\cite{sermpezis2019towards}.}


\subsection{Examples}\label{subsec:examples}


\myitem{Probabilistic Recommendations.}
The probabilistic way of defining recommendations enables us to capture generic scenarios. Consider a library of size $K=5$, and an application requiring $N=2$ recommended items. Assume that a user currently consumes content 1, and let the first row of the matrix $\mathbf{R}$ to be $\mathbf{r}_{1|\cdot} = [0.0, 1.0, 0.5, 0.5, 0.0]$. In practice, this means that after consuming content 1, content 2 will always be recommended, and the second recommendation will be for content 3 or 4 with equal probability ($r_{13}=r_{14}=0.5$).

\myitem{Increasing Hit Rate now.}
To exemplify the $q$-NFRS concept, assume again $K=5$, content 5 is cached, $q=0.8$ and $\mathbf{u}_{1|\cdot} = [0, 1.0, 1.0,0.2,0.0]$, and $q_{1}^{max} = 2.0$, see Eq.(\ref{eq:qmax}). A $q$-NFRS with interest in maximizing its cache hit would have the following policy in content 1, that is $\mathbf{r}_{1|\cdot} = [0.0, 0.8, 0.8, 0.0, 0.4]$. This way, it would satisfy the constraint but at the same time drive the user also towards item 5, which is cached.

\myitem{Increasing Hit Rate for the future.}
The example in Fig.~\ref{fig:example-comparison-rec} depicts such a scenario, where the user consumes 5 items in sequence. The RS on the left suggests the most relevant item to the currently viewed \emph{all the time}; this results in a hit rate of 20$\%$. Interestingly, on the same figure on the right, we see the RS arranging a non-trivial policy: It offers the most relevant item at all times except when it finds the user at item 3, where it slightly degrades the quality of recommendations ($u_{34}=0.8$). This simple move however, drastically changes the path of requested contents and increases the hit rate in the long run from 20$\%$ to 60$\%$.

\begin{figure}
\centering
\subfigure{\includegraphics[width=0.8\columnwidth]{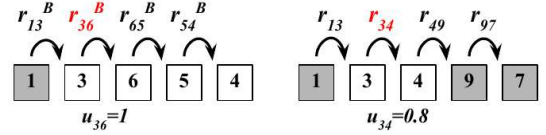}}
\vspace{-2.5mm}
\caption{Comparison of baseline (left) and network-friendly (right) recommenders. Gray and white boxes denote cached and non-cached contents, respectively. Recommending after content $3$ a slightly less similar content (i.e., content $4$ instead of $6$), leads to lower access cost in the long term.}
\label{fig:example-comparison-rec}
\end{figure}

Table~\ref{table:notation} summarizes some important notation. Vectors and matrices are denoted with bold symbols.

\vspace{5pt}
\begin{table}[h]
\centering
\caption{Important Notation}\label{table:notation}
\begin{small}
\begin{tabular}{|l|l|}
\hline
{$\alpha$}			&{Prob. the user follows recommendations}\\
\hline
{$r_{ij}$}			&{Prob. to recommend $j$ after viewing $i$}\\
\hline
{$q_{i}^{max}$}		&{Maximum baseline quality of content $i$}\\
\hline
{$q$}		        &{Percentage of original quality}\\
\hline
{$\mathbf{p}_0$}	&{Baseline popularity of contents}\\
\hline
{$u_{ij}$}			&{Similarity scores content pairs $\{i,j\}$}\\
\hline
{$c_i$}			    &{Access cost for content i}\\
\hline
{$\mathcal{K}$}		&{Content catalogue (of cardinality $K$)}\\
\hline
{$N$}				&{Number of recommendations}\\
\hline
\end{tabular}
\end{small}
\end{table}

\section{Problem Formulation}
\label{sec:problem_formulation}
The goal of this paper is to carefully select recommendations in order to reduce the content access cost for users that have long sessions in multimedia applications. In this section, we initially cast the user request process as an Absorbing Markov Chain (AMC) (Section \ref{sec:RR}), which then helps us to derive the expected content access cost for a user session (Section~\ref{sec:expected-cost}). Finally we conclude the section by formulating the optimization problem of network-friendly recommendations (Section~\ref{sec:optim-problem-formulation}).

\subsection{Renewal Reward Process}\label{sec:RR}
As we described earlier, a session for a recommendation-driven user consists of a sequence of periods during which she follows recommendations, say $S_R$, intermixed with steps at which the user ignores recommendations (see Def. \ref{def:requests}). To better visualize such a session see Fig.~\ref{fig:example2}. We will use the following two arguments to model such a session:
(i) Each $S_R$ period can be modeled with an absorbing Markov chain with transition matrix $\mathbf{P}$ (show matrix) of size $K+1 \times K+1$,

\begin{equation}\label{eq:transition-matrix}
\mathbf{P} = \left(
\begin{array}{@{}c|c@{}}
\frac{\alpha}{N} \cdot \mathbf{R} & \begin{array}{@{}c@{}} 1 - \alpha \\ \vdots \\ 1 - \alpha \end{array} \\
\cline{1-1}
\multicolumn{1}{@{}c}{
  \begin{matrix} 0 & \cdots & 0 \end{matrix}
} & 1
\end{array}
\right)
\end{equation}

where the transient part $\mathbf{Q} = \frac{\alpha}{N}\cdot \mathbf{R}$ corresponds to the user following recommendations (according to our control $K \times K$ variable $\mathbf{R}$); each such period can end at any step with a probability $1-\alpha$, modeled as an additional absorbing state.
(ii) When a recommendation period ends, the process gets ``renewed'', that is since the user ''re-enters'' the catalog from the same initial distribution $\mathbf{p}_0$ when not following recommendations, each $S_R$ period is I.I.D. 

Hence, a user session can be modeled as a renewal process, that renews after each recommendation period (i.e. every time the user decides to not follow recommendations). In the following, we use the above AMC to derive the expected cost per recommendation period, and the renewal reward theorem to derive the expected cost of the entire session, which will serve as our optimization problem objective.

\begin{definition}[Content Sequence]\label{lemma:RR}
A content access sequence $S = \{S_{R}^{1},S_{R}^{2},\dots\}$ defines a renewal process, with subsequences $S_{R}^{i}$, where the user follows recommended content, each ending with a jump outside of the RS. The cumulative cost of contents $C(S_{R}^{i})$ that incurred during a cycle $S_{R}^{i}$ is the cost of that cycle.
\end{definition}


\begin{figure}
\centering
\subfigure{\includegraphics[width=0.7\columnwidth]{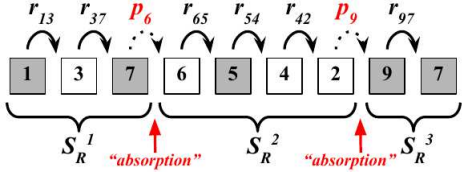}}
\vspace{-2mm}
\caption{Example of a multi-content session. Gray and white boxes denote cached and non-cached contents, respectively. A user follows recommendations (continuous arrows) or ignores them (dotted arrows).}
\label{fig:example2}
\end{figure}

%

\subsection{Long Term Expected Cost} \label{sec:expected-cost}

The goal of this subsection is to derive the long term expected cost of a user session. If we denote as $C(S_{R}^{i})$ the cost of the $i$-th cycle, we get the following expression

\begin{align}
    E[C(S_{R})] = \lim_{T \to \infty} \frac{C(S_{R}^{1}) + \dots + C(S_{R}^{T})}{T}
\end{align}


\begin{lemma}[Recommendation-Driven Cost] \label{lemma:rec-cost}
The content access cost $C(S_{R})$ during a (recommendation) renewal cycle $S_{R}$ is given by
\begin{equation}\label{eq:cost-cycle}
E[C(S_{R})] = \mathbf{p}_0^{T} \cdot \mathbf{G} \cdot \mathbf{c},
\end{equation}

If we further denote as $|S_{R}|$ the expected length of such a cycle, then by using the geometric r. v. argument, the expected length is

\begin{equation}\label{eq:cycle-length}
E[|S_{R}|] = \mathbf{p}_0^{T} \cdot \mathbf{G} \cdot \mathbf{1} = \frac{1}{1-\alpha},
\end{equation}
where $\mathbf{G} = \left(\mathbf{I}-\frac{\alpha}{N} \cdot \mathbf{R}\right)^{-1}$ is the \say{fundamental matrix} of the AMC described by Eq. (\ref{eq:transition-matrix}). 
\end{lemma}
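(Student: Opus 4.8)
The plan is to treat a single renewal cycle $S_R$ as one trajectory of the absorbing Markov chain of Eq.~(\ref{eq:transition-matrix}): the user re-enters the catalog at a content $X_0$ drawn from $\mathbf{p}_0$, then repeatedly follows recommendations according to the transient kernel $\mathbf{Q}=\frac{\alpha}{N}\mathbf{R}$, until at some step she hits the absorbing state (which happens with probability $1-\alpha$ at each step); the cost of the cycle is the sum of retrieval costs $c_{X_t}$ over all transient states visited. Because the cycles $S_R^i$ are i.i.d.\ (Section~\ref{sec:RR}), the long-run per-cycle average appearing in the display preceding the lemma coincides (by the SLLN) with the expectation of a single cycle, so it suffices to compute $E[C(S_R)]$ for one cycle. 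As a preliminary I would record that $\mathbf{G}=(\mathbf{I}-\mathbf{Q})^{-1}$ is well defined: $\mathbf{R}$ is entrywise nonnegative and every row of $\mathbf{R}$ sums to $N$ (there are $N$ recommendation slots; cf.\ the example in Section~\ref{subsec:examples}), so $\mathbf{Q}$ is nonnegative with constant row sum $\alpha<1$, its spectral radius equals $\alpha$, the Neumann series $\sum_{k\ge 0}\mathbf{Q}^k$ converges to $(\mathbf{I}-\mathbf{Q})^{-1}$, and absorption occurs almost surely.

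Next I would invoke the standard probabilistic reading of the fundamental matrix: $[\mathbf{G}]_{ij}$ equals the expected number of visits to content $j$ before absorption, given that the cycle starts at content $i$. This follows directly from the Neumann series, since $[\mathbf{Q}^k]_{ij}=\Pr[X_k=j,\ \text{not yet absorbed}\mid X_0=i]$ and summing over $k\ge 0$ counts expected visits. Then linearity of expectation gives $E[C(S_R)\mid X_0=i]=\sum_j [\mathbf{G}]_{ij}c_j=[\mathbf{G}\mathbf{c}]_i$, and averaging over $X_0\sim\mathbf{p}_0$ yields $E[C(S_R)]=\sum_i p_{0i}[\mathbf{G}\mathbf{c}]_i=\mathbf{p}_0^{T}\mathbf{G}\mathbf{c}$, which is Eq.~(\ref{eq:cost-cycle}).

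For the expected length I would run the same argument with the cost vector replaced by $\mathbf{1}$: the number of contents consumed in the cycle is the total number of visits to transient states, so $E[|S_R|]=\mathbf{p}_0^{T}\mathbf{G}\mathbf{1}$. To obtain the closed form, note $\mathbf{Q}\mathbf{1}=\frac{\alpha}{N}\mathbf{R}\mathbf{1}=\alpha\mathbf{1}$, hence $(\mathbf{I}-\mathbf{Q})\mathbf{1}=(1-\alpha)\mathbf{1}$, i.e.\ $\mathbf{G}\mathbf{1}=\frac{1}{1-\alpha}\mathbf{1}$; since $\mathbf{p}_0$ is a probability vector, $\mathbf{p}_0^{T}\mathbf{G}\mathbf{1}=\frac{1}{1-\alpha}$, giving Eq.~(\ref{eq:cycle-length}). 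Equivalently, and as a consistency check, the number of contents seen before a jump-out is geometric with success probability $1-\alpha$, whose mean is $\frac{1}{1-\alpha}$.

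All of the computation here is routine; the one place that genuinely needs care is the bookkeeping at the renewal boundary — making sure the content drawn from $\mathbf{p}_0$ is counted as the first content of the cycle (so the cycle begins in a transient state, not in the absorbing state) and that the subsequent jump-out content is attributed to the next cycle. Pinning down this convention is exactly what guarantees that the initial distribution multiplying $\mathbf{G}$ on the left is precisely $\mathbf{p}_0$, and that the geometric length count and the matrix expression $\mathbf{p}_0^{T}\mathbf{G}\mathbf{1}$ agree.
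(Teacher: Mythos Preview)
Your proof is correct and follows exactly the route the paper sets up: read the cycle as an absorbing Markov chain with transient kernel $\mathbf{Q}=\frac{\alpha}{N}\mathbf{R}$, invoke the fundamental-matrix interpretation $\mathbf{G}=\sum_{k\ge 0}\mathbf{Q}^k$ to count expected visits, and sum costs (resp.\ ones) via linearity. Your justification that $\rho(\mathbf{Q})=\alpha<1$ from the row-sum constraint $\mathbf{R}\mathbf{1}=N\mathbf{1}$, and the consistency check against the geometric cycle length, are exactly the ingredients the lemma statement itself signals; there is nothing to add or correct.
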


\begin{proof} 
Can be found in the Appendix.
\end{proof}

Finally, the following theorem which gives the long term expected cost, follows immediately from Def. \ref{lemma:RR}, Lemma~\ref{lemma:rec-cost}, and the Renewal-Reward theorem~\cite{mor2013}.
\begin{theorem}\label{thm:total-expected-cost}
The LTEC, for a long user session S, given a recommendation matrix $\mathbf{R}$ is

\begin{equation}\label{eq:cost-rate}
\lim_{T \to \infty} \sum_{i = 1}^T \frac{C(S_{R}^{i})}{T} \stackrel{RR} = \frac{E[C(S_R)]}{E[|S_R|]} =  
\frac{\mathbf{p}_0^{T} \cdot \left(\mathbf{I}-\frac{\alpha}{N} \cdot \mathbf{R}\right)^{-1} \cdot \mathbf{c}}{\frac{1}{1-\alpha}}.
\end{equation}
\end{theorem}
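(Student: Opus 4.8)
The plan is to assemble the statement directly from the three ingredients the theorem explicitly cites: the definition of the content access sequence as a renewal process (Def.~\ref{lemma:RR}), Lemma~\ref{lemma:rec-cost} for the per-cycle cost and length, and the Renewal-Reward theorem. First I would set up the bookkeeping: the session $S$ is decomposed into i.i.d.\ cycles $S_R^1, S_R^2, \dots$, where each cycle consists of the maximal run of recommendation-following steps terminated by one ``ignore'' step that re-draws from $\mathbf{p}_0$; the rewards accumulated over cycle $i$ are the retrieval costs $C(S_R^i)$, and the cycle lengths $|S_R^i|$ are the corresponding numbers of requests. The i.i.d.\ property of the cycles is exactly what was argued in Section~\ref{sec:RR} (point (ii): the process ``renews'' because the user re-enters the catalog from the same distribution $\mathbf{p}_0$), so I would invoke that rather than re-derive it.

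Next I would apply the Renewal-Reward theorem: if the cycles are i.i.d.\ with finite expected length and finite expected reward, then the long-run average reward per unit time equals $E[C(S_R)]/E[|S_R|]$ almost surely. Here ``time'' is counted in number of requests, so $\lim_{T\to\infty}\frac{1}{T}\sum_{i=1}^T C(S_R^i)$ is precisely the time-average cost, which converges to the ratio $E[C(S_R)]/E[|S_R|]$. Finiteness of both moments is immediate because $\alpha<1$ forces the cycle length to be stochastically dominated by a geometric random variable with parameter $1-\alpha$, and the costs $c_i$ are bounded over the finite catalogue $\mathcal{K}$. I would note this dominance explicitly to justify applicability of the theorem.

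Then I would substitute the closed forms from Lemma~\ref{lemma:rec-cost}: $E[C(S_R)] = \mathbf{p}_0^{T}\mathbf{G}\mathbf{c}$ with $\mathbf{G} = (\mathbf{I}-\tfrac{\alpha}{N}\mathbf{R})^{-1}$, and $E[|S_R|] = \mathbf{p}_0^{T}\mathbf{G}\mathbf{1} = \tfrac{1}{1-\alpha}$. Plugging these into the ratio yields
\begin{equation*}
\frac{E[C(S_R)]}{E[|S_R|]} = \frac{\mathbf{p}_0^{T}\left(\mathbf{I}-\tfrac{\alpha}{N}\mathbf{R}\right)^{-1}\mathbf{c}}{\tfrac{1}{1-\alpha}},
\end{equation*}
which is exactly the claimed expression; the derivation then closes. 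One subtlety worth a sentence is that $\mathbf{G}$ is well-defined: since $\mathbf{R}$ is (sub)stochastic and $\alpha/N \le \alpha < 1$, the matrix $\tfrac{\alpha}{N}\mathbf{R}$ has spectral radius strictly below $1$, so $\mathbf{I}-\tfrac{\alpha}{N}\mathbf{R}$ is invertible with a convergent Neumann series.

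The main obstacle is not the algebra but getting the renewal-reward application clean: one must be careful that the ``renewal'' instants are genuine i.i.d.\ regeneration points (which relies on the time-scale separation assumption on $\mathbf{p}_0$ and on $\alpha$ being fixed), and that the quantity being averaged matches the per-unit-reward interpretation of the theorem rather than, say, per-cycle. Since Section~\ref{sec:RR} and Lemma~\ref{lemma:rec-cost} already establish the regeneration structure and the per-cycle moments, the theorem itself is essentially a one-line consequence, and the proof can legitimately be short.
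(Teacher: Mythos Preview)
Your proposal is correct and follows exactly the paper's approach: the paper states that the theorem ``follows immediately from Def.~\ref{lemma:RR}, Lemma~\ref{lemma:rec-cost}, and the Renewal-Reward theorem,'' and you have simply spelled out those three ingredients with the appropriate finiteness and invertibility checks. If anything, your write-up is more careful than the paper's one-line justification.
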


\subsection{Optimization Problem}\label{sec:optim-problem-formulation}
In \textbf{\nameref{problem:basis}} we formulate the optimization problem, where the goal is to minimize the expected cost given in Theorem~\ref{thm:total-expected-cost} (objective function), by selecting the recommendations $\mathbf{R}$ (optimization variables). 

\begin{problem}[OP-Uni]\label{problem:basis}
\begin{subequations}\label{eq:objective-infinite-step}
\begin{align}
\underset{\mathbf{R}}{\textnormal{minimize}}~~~ &
  \frac{\mathbf{p}_0^{T} \cdot \left(\mathbf{I} - \frac{\alpha}{N} \cdot \mathbf{R}\right)^{-1} \cdot \mathbf{c}}{\frac{1}{1-\alpha}}, \label{eq:objective-infinite-step}\\
\textnormal{subject to}~~~& \sum_{j = 1}^{K} r_{ij} \cdot u_{ij} \geq q \cdot q_i^{max}, ~~\forall i~\in \mathcal{K},
\label{quality-con}\\
& \sum_{j = 1}^{K}   r_{ij} = N, ~~\forall i~\in \mathcal{K} \label{affine-con}\\
& 0 \le r_{ij} \le 1 ~ (i \ne j), ~~ r_{ii} = 0. \label{box-con}
\end{align}
\end{subequations}
\end{problem}

As discussed earlier, recommendations need to satisfy the quality constraints of Eq. (\ref{eq:quality}) (captured in \eq{quality-con}), be exactly $N$ for each content (captured in \eq{affine-con}), and conform to Def.~\ref{def:control-variables} (captured in \eq{box-con}).

\section{Optimization Methodology}
\label{sec:optimization_methodology}
In this section, we deal with \textbf{\nameref{problem:basis}}, by first characterizing its convexity properties and then by applying a series of transformations that lead to a Linear Programming formulation.

\begin{lemma}\label{lemma:initial-nonconvex}
The problem described in \textbf{\nameref{problem:basis}} is nonconvex. 
\end{lemma}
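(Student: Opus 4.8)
The plan is to exhibit the nonconvexity directly by showing that the objective function $f(\mathbf{R}) = (1-\alpha)\,\mathbf{p}_0^T (\mathbf{I} - \tfrac{\alpha}{N}\mathbf{R})^{-1} \mathbf{c}$ fails to be convex over the feasible polytope defined by \eqref{quality-con}--\eqref{box-con}, while noting that the feasible set itself is convex (it is cut out by the linear equalities \eqref{affine-con}, linear inequalities \eqref{quality-con}, and box constraints \eqref{box-con}), so the obstruction is entirely in the objective. First I would recall that for a matrix-valued affine map $\mathbf{M}(\mathbf{R}) = \mathbf{I} - \tfrac{\alpha}{N}\mathbf{R}$, the entries of the inverse $\mathbf{M}(\mathbf{R})^{-1}$ are ratios of polynomials (cofactors over determinant) in the entries $r_{ij}$, hence $f$ is a rational function, not affine and not obviously convex or concave. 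To make the failure rigorous, I would pick the smallest nontrivial instance — e.g., $K=2$ (or $K=3$ if $r_{ii}=0$ forces degeneracy at $K=2$) with $N=1$ — parametrize a line segment $\mathbf{R}(t) = (1-t)\mathbf{R}_0 + t\mathbf{R}_1$ between two feasible recommendation matrices, and compute $g(t) = f(\mathbf{R}(t))$ explicitly as a scalar rational function of $t \in [0,1]$.

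The key computational step is then to show $g''(t) < 0$ somewhere (or, more simply, to produce two feasible points $\mathbf{R}_0, \mathbf{R}_1$ and check $f(\tfrac{1}{2}\mathbf{R}_0 + \tfrac{1}{2}\mathbf{R}_1) > \tfrac{1}{2} f(\mathbf{R}_0) + \tfrac{1}{2} f(\mathbf{R}_1)$, which violates the definition of convexity). Concretely, with a scalar self-loop-free $2\times 2$ example the relevant diagonal-type entry of $(\mathbf{I} - \tfrac{\alpha}{N}\mathbf{R})^{-1}$ behaves like $\tfrac{1}{1 - \beta^2 r_{12} r_{21}}$ for a constant $\beta = \alpha/N$, and a function of the form $h(x,y) = \tfrac{1}{1 - \beta^2 xy}$ is manifestly nonconvex on the unit square (its Hessian is indefinite — e.g.\ along the diagonal $x=y$ it is convex, but along the anti-diagonal direction it is concave near the origin). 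Substituting such an entry into $\mathbf{p}_0^T(\cdot)\mathbf{c}$ with generic positive $\mathbf{p}_0$ and $\mathbf{c}$ preserves the nonconvexity, and one checks the two chosen matrices lie in the feasible polytope by verifying the (easy) linear/box constraints for a suitable choice of $\mathbf{U}$, $q$, and $q_i^{max}$.

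The main obstacle I anticipate is not the algebra of differentiating a rational function, but making sure the counterexample points are genuinely feasible for \emph{some} admissible problem data: the quality constraint \eqref{quality-con} couples $\mathbf{R}$ to $\mathbf{U}$ and $q$, so I would need to either (a) choose $\mathbf{U}$ and $q$ lenient enough (e.g.\ $q$ small, or $u_{ij}$ all equal) that \eqref{quality-con} is slack at both endpoints, or (b) argue that nonconvexity of the objective on the affine subspace $\{\sum_j r_{ij} = N\} \cap \{0\le r_{ij}\le 1, r_{ii}=0\}$ already suffices since one can always relax $q \to 0$. I would go with route (b) for cleanliness: show the objective restricted to the simpler polytope (dropping only the quality constraint, which is the $q\to 0$ case that must also be covered by any claim that "the problem" is nonconvex) is nonconvex, and remark that adding more constraints does not restore convexity of a nonconvex objective. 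A final sentence would note that this is precisely why the subsequent transformations in Section~\ref{sec:optimization_methodology} are needed.
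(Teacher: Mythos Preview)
Your proposal is correct in spirit and would yield a valid proof, but it takes a genuinely different route from the paper. The paper's argument is much shorter: it observes (as you do) that the feasible region is a polytope and hence convex, then specializes to $\mathbf{p}_0 = \mathbf{c} = \mathbf{w}$ so that the objective becomes the quadratic-over-inverse form $\mathbf{w}^T(\mathbf{I}-\tfrac{\alpha}{N}\mathbf{R})^{-1}\mathbf{w}$, and simply cites \cite{boyd2004convex,ermon2014designing} for the fact that this form is nonconvex unless $\mathbf{R}$ is restricted to be symmetric positive semidefinite. Your approach instead builds an explicit low-dimensional counterexample by computing a diagonal entry of the resolvent, $(1-\beta^2 r_{12}r_{21})^{-1}$, and checking its Hessian is indefinite. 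What your route buys is self-containedness (no appeal to an external convexity result), at the price of more bookkeeping; the paper's route is essentially a one-line reduction to a known fact.

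Two small cautions about your plan. First, as you already suspect, the $K=2$, $N=1$ case is degenerate: the constraints $r_{ii}=0$ and $\sum_j r_{ij}=N$ pin $\mathbf{R}$ to a single point, so you must go to $K\ge 3$ (or $N\ge 2$) to get a nontrivial feasible segment. Second, your closing remark that ``adding more constraints does not restore convexity of a nonconvex objective'' is not true in general (restricting a nonconvex function to a smaller convex set can make it convex). The part of your argument that actually works is the one you state just before it: the $q\to 0$ instance is itself a legitimate instance of \textbf{\nameref{problem:basis}}, and exhibiting nonconvexity for \emph{some} admissible data $(\mathbf{U},q,\mathbf{p}_0,\mathbf{c})$ is exactly what the lemma asserts. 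Drop the throwaway remark and your argument is clean.
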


\begin{proof}
The problem \textbf{\nameref{problem:basis}} comprises $K^{2}$ variables $r_{ij}$, and a set of $K^{2} + 2\cdot K$ linear (equality and inequality) constraints, thus the feasible solution space is convex. However, assume w.l.o.g that $\mathbf{p}_0 = \mathbf{c} = \mathbf{w}$; the objective now becomes $f(\mathbf{R})=\mathbf{w}^{T}(\mathbf{I}-\frac{a}{N} \cdot \mathbf{R})^{-1}\mathbf{w}$. Unless we constrain $\mathbf{R}$ to be in the class of symmetric and positive semidefinite matrices, the objective is nonconvex~\cite{boyd2004convex,ermon2014designing}.
\end{proof}

Hence, there is no polynomial time algorithm solving problem \textbf{\nameref{problem:basis}}.
While one might be tempted to reduce the feasible solution space of variable $\mathbf{R}$ and force it to be symmetric and positive semidefinite, and solve the problem as a convex SDP, this fundamentally leads to suboptimal solutions.



\subsection{Road to the Optimal Solution}\label{sec:journey-optimality}

A fundamental difficulty of \textbf{\nameref{problem:basis}} is the inverse matrix in the objective $\mathbf{p}_0^{T} \cdot \left(\mathbf{I} - \frac{\alpha}{N} \cdot \mathbf{R}\right)^{-1} \cdot \mathbf{c}$. 
A reasonable first action is to introduce $K$ auxiliary variables and set them equal to $\mathbf{z}^{T} = \mathbf{p}_0^T \cdot(\mathbf{I} - \frac{a}{N} \cdot \mathbf{R})^{-1}$. Multiplying both sides from the right with $(\mathbf{I} - \frac{a}{N} \cdot \mathbf{R})$ yields

\begin{align}\label{eq:new-equality-cons}
    \mathbf{z}^{T} \cdot (\mathbf{I} - \frac{a}{N} \cdot \mathbf{R}) = \mathbf{p}_0^T
\end{align} 

Hence, problem \textbf{\nameref{problem:basis}} is equivalent to the following \footnote{Two problems are equivalent if the solution of the one, can be uniquely obtained through the solution of the other; introducing auxiliary variables preserves the property. We refer the reader to~\cite{boyd2004convex} for more details.}

\begin{intermediate}[Equivalent formulation]\label{problem:eq-1}
\begin{subequations}
\begin{align}
\underset{\mathbf{z},~\mathbf{R} }{\textnormal{minimize}}~~~& 
  \mathbf{c}^T \cdot \mathbf{z},\label{eq:objective-interm1}\\
\textnormal{subject to}~~~&  \mathbf{z}^T - \frac{a}{N}\cdot \mathbf{z}^T\cdot \mathbf{R} = (1-\alpha) \cdot \mathbf{p}_0^T \label{stationarity-con}\\
& \sum_{j =1}^{K} r_{ij} \cdot u_{ij} \geq q \cdot q_i^{max},~~\forall i~\in \mathcal{K},\\
& \sum_{j = 1}^{K} r_{ij} = N,~~\forall i~\in \mathcal{K},\\
& 0 \le r_{ij} \le 1 ~ (i \ne j), ~~ r_{ii} = 0
\end{align}
\end{subequations}
\end{intermediate}

Observe that although now the objective is linear in the variable $\mathbf{z}$, the constraint of Eq. (\ref{stationarity-con}) is quadratic in the $\mathbf{z,R}$. There this step does not seem \emph{yet} like much of a progress.

\myitem{Discussion:} The above formulation falls under the umbrella of non-convex quadratically constrained quadratic program (QCQP), where it is common to perform a convex relaxation of the quadratic constraints, and then solve an approximate convex problem (e.g., semidefinite program (SDP) or Spectral relaxation, see~\cite{park2017general} for more details). 
The problem can also be seen as \emph{bi-convex} in variables $\mathbf{R}$ and $\mathbf{z}$, respectively. Alternating direction method of multipliers (ADMM) can be applied to such problems, iteratively solving convex subproblems~\cite{boyd2011distributed,giannakas2018show}. Nevertheless, none of these methods provides any optimality guarantees, and even convergence for non-convex ADMM is an open research topic~\cite{wang2015global,gao2019admm}.

The above discussion motivates us to pay closer attention to the problem structure and the actual meaning of the variables at hand. For this reason instead, we introduce an additional variable transformation, where we define variables $f_{ij}$ defined as $f_{ij} = z_{i} \cdot r_{ij}$. Focusing on the Eq. (\ref{stationarity-con}) in scalar form we have:
\begin{align}
    z_j & = \frac{\alpha}{N} \sum_{i=1}^K z_i \cdot r_{ij} + (1-\alpha) \cdot p_{0j} \Rightarrow \nonumber \\
    z_j & = \frac{\alpha}{N}\sum_{i=1}^K f_{ij} + (1-\alpha) \cdot p_{0j}
\end{align}
The new variables are $\mathbf{z}$ (vector of size $K\times 1$ vector) and $\mathbf{F}$ (matrix of size $K\times K$ matrix). 

\myitem{Interpretation of $f_{ij}$.}
\added{The vector appearing in the objective Eq. (\ref{eq:objective-infinite-step}) represents the stationary distribution of a PageRank-like model defined by our stochastic process~\cite{giannakas2018show,avrachenkov2006pagerank}.
Therefore, the scalar quantity $(1-\alpha) \cdot z_i$ expresses the long-term probability that item $i$ is requested. Given that, it is easy to see that $ z_i = \pi_{i} \cdot \frac{1}{1-\alpha}$ and as a consequence
\begin{align}
f_{ij} = \frac{1}{1-\alpha}\cdot \pi_{i}\cdot r_{ij}
\end{align}
Recall from our definitions: $r_{ij}$ denotes the probability to recommend content $j$ conditioned on the fact that the user is at content $i$; $f_{ij}$ translates to 
the percentage of time (in the long run) the user \emph{was} at $i$ \emph{and saw} $j$ in her RS list, scaled by the quantity $\frac{1}{1-\alpha}$.}
\begin{problem}[LP~(\textbf{\nameref{problem:basis}})]\label{problem:basis-LP}
\begin{subequations}\label{eq:objective-LP}
\begin{align}
\underset{\mathbf{z},~\mathbf{F}}{\textnormal{minimize}}~~~& 
  \mathbf{c}^T \cdot \mathbf{z}, \label{eq:objective-LP} \\
\textnormal{subject to}~~~& \sum_{j=1}^{K} f_{ij} \cdot u_{ij} - z_i \cdot q\cdot q_i^{max} \geq 0, ~\forall~i~\in\mathcal{K}
\label{quality-con-LP}\\
& \sum_{j = 1}^{K} f_{ij} - N\cdot z_i = 0,~\forall~i~\in\mathcal{K} \label{affine-con-LP}\\
& f_{ij} - z_i \le 0~\forall~i,j\in\mathcal{K} \label{affine-f-z-Ineq-LP}\\
& f_{ij}\ge 0~(i \neq j),~f_{ii} = 0 \label{f-positive-con-LP}\\
& z_j - \frac{\alpha}{N} \cdot \sum_{i=1}^K f_{ij} = p_{0j}, ~\forall j\in\mathcal{K} \label{zf-relaxation}
\end{align}
\end{subequations}
\end{problem}
For the set of constraints in \textbf{\nameref{problem:basis}} we simply substituted $r_{ij} = z_{i} \cdot f_{ij}$ and \eq{quality-con} $\to$ \eq{quality-con-LP}, \eq{affine-con} $\to$ \eq{affine-con-LP} and finally the left handside of \eq{box-con} $\to$ \eq{f-positive-con-LP} whereas its right handside becomes \eq{affine-f-z-Ineq-LP}. Finally, $r_{ii} = 0 \to f_{ii} = 0$.

\begin{lemma}\label{lemma:bijection}
The change of variables $f_{ij} = z_{i} \cdot r_{ij}$, is a one-to-one mapping between $(z_i,r_{ij})$ and $(z_i,f_{ij})$. 
\end{lemma}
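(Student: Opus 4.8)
The plan is to show that the transformation $f_{ij} = z_i r_{ij}$ is invertible on the relevant domain, which amounts to recovering $(z_i, r_{ij})$ uniquely from $(z_i, f_{ij})$ and vice versa. The forward direction is immediate by definition: given $(z_i, r_{ij})$ we simply set $f_{ij} := z_i r_{ij}$, which is well-defined. The content is in the backward direction, where the obvious candidate inverse is $r_{ij} := f_{ij}/z_i$, and the only thing that can go wrong is $z_i = 0$ for some $i$.

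First I would argue that $z_i > 0$ for every $i \in \mathcal{K}$ on the feasible set. This follows from the interpretation established just before the statement: $z_i = \pi_i/(1-\alpha)$, where $\pi_i$ is the stationary probability that content $i$ is requested in the PageRank-like chain. Concretely, from the stationarity constraint \eqref{zf-relaxation} (equivalently \eqref{stationarity-con}) we have $z_j = \frac{\alpha}{N}\sum_{i} f_{ij} + (1-\alpha)p_{0j} \ge (1-\alpha) p_{0j} > 0$, since $f_{ij} \ge 0$ by \eqref{f-positive-con-LP}, $\alpha \in (0,1)$, and $p_{0j} \in (0,1)$ by Definition~\ref{def:requests}. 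Thus $z_i$ is bounded away from $0$, and the map $r_{ij} \mapsto f_{ij}/z_i$ is well-defined. I would then verify that this is a genuine two-sided inverse: starting from $(z_i, r_{ij})$, forming $f_{ij} = z_i r_{ij}$ and mapping back gives $f_{ij}/z_i = r_{ij}$; starting from $(z_i, f_{ij})$, forming $r_{ij} = f_{ij}/z_i$ and mapping forward gives $z_i r_{ij} = f_{ij}$. Both compositions are the identity, so the correspondence is a bijection.

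Finally, I would note for completeness that the transformation also carries the feasible regions onto each other, so that nothing is lost: dividing each constraint of \textbf{\nameref{problem:eq-1}} by the positive scalar $z_i$ where appropriate produces exactly the constraints of \textbf{\nameref{problem:basis-LP}} (e.g.\ $\sum_j r_{ij} u_{ij} \ge q\, q_i^{max}$ becomes $\sum_j f_{ij} u_{ij} \ge z_i\, q\, q_i^{max}$, the box constraint $r_{ij} \le 1$ becomes $f_{ij} \le z_i$, and $r_{ij} \ge 0$ becomes $f_{ij} \ge 0$), and conversely multiplying back by $z_i$ recovers the original ones; the objective $\mathbf{c}^T \mathbf{z}$ is untouched since it does not involve $\mathbf{R}$ or $\mathbf{F}$.

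The main obstacle is really just the $z_i = 0$ edge case — everything else is bookkeeping. The crux is the strict positivity argument, which hinges on $p_{0j} > 0$ (strictly, not just $\ge 0$); this is exactly why Definition~\ref{def:requests} insists on $p_{0j} \in (0,1)$ rather than allowing zeros, and I would make that dependence explicit in the write-up so the reader sees where the hypothesis is used.
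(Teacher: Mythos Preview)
Your proposal is correct and follows essentially the same approach as the paper: the key step in both is showing $z_i>0$ from the stationarity relation $z_j = \frac{\alpha}{N}\sum_i f_{ij} + (1-\alpha)p_{0j}$ together with $f_{ij}\ge 0$ and $p_{0j}>0$, so that $r_{ij}=f_{ij}/z_i$ is well-defined. Your additional remarks on the two-sided inverse and the correspondence of feasible regions go slightly beyond what the paper writes out, but the core argument is identical.
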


\begin{proof}
To obtain $r_{ij}$, one needs to compute  $f_{ij}/z_{i}$ from the pair $\{z_i, r_{ij}\}$. In order to retrieve $r_{ij}$ from the above computation, the value $z_{i}$ should be strictly nonzero, as then $r_{ij}$ would be undefined. However, observe from 
\begin{align}
    z_{j} = \frac{\alpha}{N} \cdot \sum_{i=1}^{K} f_{ij} + p_{0j} > 0~\forall~j~\in~\mathcal{K}
\end{align}

since $f_{ij} \ge 0$ and $p_{0i} > 0, \forall i$ (see Def.~\ref{def:requests}), this forces $\mathbf{z}$ to be strictly positive and thus never zero. Therefore $r_{ij}$ are always uniquely defined provided that $p_{0i} > 0~\forall~i~\in~\mathcal{K}$.
\end{proof}

Note that the only condition we needed to establish in order for the Lemma \ref{lemma:bijection} to hold, is that \emph{all} contents must have a nonzero probability to be requested from the user.
Combining Lemma~\ref{lemma:bijection}, along with the definition of problem in the Intermediate Step, yields the final formulation \textbf{\nameref{problem:basis-LP}}.

\subsection{Computational benefits of equivalent problem}
The \nameref{problem:basis-LP} corresponds to a Linear Program and it consists of $2K^2 + 4K +1$ linear constraints. 
To combat LPs, there are plently of implemented widely used solvers (e.g., ILOG CPLEX, GUROBI, MOSEK).
There are two benefits in transforming our problem to an LP compared to the heuristic ADMM we presented in \cite{giannakas2018show}.

\begin{enumerate}
    \item Optimality guarantees for \nameref{problem:basis-LP}.
    \item No need for parameter tuning.
\end{enumerate}

Solving \textbf{\nameref{problem:basis}} via ADMM: (1) returns in principle a suboptimal solution, and (2) its performance heavily depends on carefully selecting the parameter $\mu$ (the penalty on the quadratic term), while the CPLEX has no need of tuning.
To validate this, we increase the library size $K$ and solve the \emph{exact same} instances of \textbf{\nameref{problem:basis}}, and we report the results (execution time, and cache hit rate) in Tables \ref{table:admm-LP-1} and \ref{table:admm-LP-2}; the details of the problem parameters are provided in Section~\ref{sec:sims}.


For the ADMM implementation \cite{giannakas2018show}, the inner minimization loops of the ADMM were implemented using cvxpy~\cite{cvxpy} and more specifically the solver SCS~\cite{o2016scs}. In the following simulations we chose the ADMM tuning parameters as $\mu = 30$. These experiments were carried out using a PC with RAM: 8 GB 1600 MHz DDR3 and Processor: 1,6 GHz Dual-Core Intel Core i5.
In both tables, the execution times of the LP-based solution is lower and returns a better objective value. However note that as we tighten the accuracy of the ADMM loop in Table \ref{table:admm-LP-2}, the suboptimality gap becomes much smaller, but that comes in cost of significantly higher execution times.

\begin{table}[h]
\centering
\caption{LP-based vs ADMM - Looser accuracy ($\epsilon=0.01$)} \label{table:admm-LP-1}
\begin{tabular}{l|c|c|c|c}
{Metrics $\to$} & \multicolumn{2}{c|}{Cache Hit Rate ($\%$)} & \multicolumn{2}{c}{Execution Time ($s$)} \\
\hline
{Method $\to$}               			& {LP} & {ADMM} & {LP} & {ADMM}  \\
\hline 
{$K = 50$}			    			    &{49.79} &{47.45} &{0.061} &{1.293} \\
\hline
{$K = 100$}			    			    &{53.40} &{51.86} &{0.571} &{8.560} \\
\hline
{$K = 150$}		        				&{49.74} &{46.23} &{1.804} &{16.401} \\
\hline
{$K = 200$}		        				&{52.34} &{47.39} &{6.560} &{113.44} \\
\hline
{$K = 250$}		        				&{51.74} &{46.66} &{9.384} &{162.154} \\
\hline
{$K = 300$}		        				&{52.01} &{49.29} &{15.534} &{464.787} \\
\end{tabular}
\end{table}

\begin{table}[h]
\centering
\caption{LP-based vs ADMM - Tighter accuracy ($\epsilon=0.001$)} \label{table:admm-LP-2}
\begin{tabular}{l|c|c|c|c}
{Metrics $\to$} & \multicolumn{2}{c|}{Cache Hit Rate ($\%$)} & \multicolumn{2}{c}{Execution Time ($s$)} \\
\hline
{Method $\to$}               			& {LP} & {ADMM} & {LP} & {ADMM}  \\
\hline
{$K = 50$}			    			    &{43.02} &{40.78} &{0.0605} &{1.825} \\
\hline
{$K = 100$}			    			    &{46.30} &{46.28} &{0.7802} &{18.476} \\
\hline
{$K = 150$}		        				&{50.62} &{50.32} &{2.437} &{124.14} \\
\hline
{$K = 200$}		        				&{50.27} &{49.23} &{5.773} &{457.96} \\
\end{tabular}
\end{table}



\subsection{Greedy Baseline Scheme}

Here, we will formalize a probabilistic but myopic RS that aims at minimizing the access cost of \emph{only the next immediate request}. Importantly, this will serve later as a heuristic baseline RS in the evaluation section.
Notably, the \emph{Greedy Baseline} approach resembles the policies proposed in \cite{chatzieleftheriou2019TMC,cache-centric-video-recommendation}.
More specifically, the algorithm of~\cite{chatzieleftheriou2019TMC} targets a different context (i.e., caching \emph{and} single access content recommendation); the Greedy method could be interpreted as applying the recommendation part of~\cite{chatzieleftheriou2019TMC} for each user, along with a continuous relaxation of the control (recommendation) variables. 


The approach we are following takes into account the dependence of actions in consecutive steps of the user, and attempts to minimize the long term cost $(\mathbf{P} + \mathbf{P}^2 + \dots) \cdot \mathbf{c}$, which we approximated with the stationary cost, see Eq. (\ref{eq:objective-infinite-step}). Hence, a simple approximation would be to keep only the first term of this expansion.
\begin{align}
    f(\mathbf{R}) = \mathbf{p}_1^T \cdot \mathbf{c} =  \bigg(\mathbf{p}_{0}^T \cdot \mathbf{P} \bigg) \cdot \mathbf{c}.
\end{align}
Optimizing this corresponds to a greedy (or ``myopic'') algorithm that tries to minimize the cost of the next step only. This gives rise to the following, simpler optimization problem:
\begin{problem}[OP-greedy-uni]\label{problem:myopic}
\begin{align}
\underset{\mathbf{R}}{\textnormal{minimize}}~~~ &
  \mathbf{p}_{0}^T \cdot \mathbf{R} \cdot \mathbf{c}, \label{eq:myopic}\\
\textnormal{subject to}~~~& \textnormal{Eqs. (\ref{quality-con}, \ref{affine-con}, \ref{box-con})} \label{constraints:myopic}
\end{align}
\end{problem}
The problem is an LP, as the objective can be readily written as $\sum_{i,j = 1}^K r_{ij} \cdot p_{0i}\cdot c_j$ and the set of constraints Eq. (\ref{constraints:myopic}) is the same convex set as \textbf{\nameref{problem:basis}} without the demanding set of constraints (\ref{stationarity-con}).

\myitem{Remark:} The objective can be split in to $K$ different summations, where each summand is independent. Moreover the constraints over each row of $\mathbf{R}$ are also independent. Hence, the problem is naturally decomposed into $K$ different LP.

\section{Non-Uniform Click-Through}
\label{sec:position}
In the previous sections, we have established that the problem of $q$-NFRS can be optimally solved for a recommendation-driven user defined as in Section~\ref{sec:problem_formulation}. A number of possible extensions to this simple model can be considered towards making it more realistic. We consider such an extension in this section.
Specifically, recent studies studies~\cite{what-should-you-cache-nossdav,RecImpact-IMC10}, have shown that users have the tendency to click on contents (or products in the case of e-commerce) according to their position in the list of recommendations.
Hence, the probability of picking content in the first position ($v_1$), may be higher than the probability to pick the content in position $N$ ($v_N$). In fact, a Zipf-like relation has been observed~\cite{RecImpact-IMC10}. 



\myitem{Assumption on Position Preference.}
The user behaves as in Def. \ref{def:requests} except that when $N$ recommendations are shown to her, \emph{if} she \emph{does} follow recommendations (i.e., branch $\alpha$), she clicks the item at position $i$ with probability $v_i$.

Note that, in the model considered thus far, it was essentially assumed that $v_i = 1/N$ for all positions $i$.
Incorporating the position preference presents the following complication: before, we simply needed to decide which contents to recommend, captured by control variables $r_{ij}$; now, we need to decide \emph{which content to recommend at which position}, defining $N$ sets of control variables $r_{ij}^n$.


\myitem{Example.} To make the notion of the probabilistic recommendations with positions more concrete, consider a library of $K = 4$ total files. A user just watched item $1$, and $N = 2$ items must be recommended. We now focus on the recommendations of content $1$, so let the first row of the matrix $\mathbf{R}^{1}$ be
$\mathbf{r}_{1}^{1} = [0.0, 1.0, 0.0, 0.0]$ and that of $\mathbf{R}^2$ be $\mathbf{r}_{1}^{2} = [0.0, 0.0, 0.5, 0.5]$. In practice, this means that in position 1 the user will always see content $2$ being recommended (after consuming content $1$), and the recommendation for position 2 will half the time be for content $3$ and half for content $4$.

\myitem{Objective Change.} 
Similarly to Section \ref{sec:problem_formulation}, the transient matrix is now a convex combination of the $N$ recommender matrices as follows $\mathbf{Q} = \alpha \cdot \sum_{n=1}^{N} v_n \cdot \mathbf{R}^n$.
Combining the latter expression and Theorem \ref{thm:total-expected-cost}, the goal is to minimize the expected cost of a long session of such a user 
\begin{equation}\label{eq:cost-rate-2}
\underset{\mathbf{R}^{1},.., \mathbf{R}^{N}}{\textnormal{minimize}}~ \mathbf{p}_0^{T} \cdot \left(\mathbf{I}- \alpha \cdot \sum_{n=1}^N v_n \cdot  \mathbf{R}^n\right)^{-1} \cdot \mathbf{c}
\end{equation}


\myitem{Constraint Changes.} The budget constraint \eq{affine-con} has to change as we have $N$ distinct stochastic matrices whose rows must sum to one (and not to $N$ anymore).

\begin{align}
    \sum_{j=1}^K r_{ij}^{n} = 1, \textnormal{for}~n = 1, \dots, N ~\textnormal{and}~i \in \mathcal{K}.
\end{align}

Regarding the quality constraint Eq. (\ref{eq:quality}), we need to re-define the $q_{i}^{max}$.
To do so, we make use of $\mathcal{U}_{i}(N)$, which is the set of the \emph{$N$ highest $u_{ij}$ values in decreasing order}. Then, the $q_{i}^{max}$ becomes
\begin{align}
q_{i}^{max} = \sum_{n\in\mathcal{U}_{i}(N)} v_n \cdot u_{in}
\end{align}
Thus, a baseline RS (which would achieve $q_{i}^{max}$) would place the most relevant item in the most probable position to be clicked and so on. Additionally, one needs to fix the lhs of the quality constraint Eq. (\ref{eq:quality}), 
and thus the new constraint becomes,
\begin{align}
\sum_{n=1}^N v_n \sum_{j=1}^K r_{ij}^{n} \cdot u_{ij} \ge q \cdot q_{i}^{max}~\forall~i~\in~\mathcal{K},~q \in [0,1]
\end{align}
Furthermore, we need to avoid situations where some content appears simultaneously in more than one positions. As an example, suppose a catalog of $K=4$, and that we need to suggest the user $N=2$ recommendations, and we are interested for the recommendation policy of content $\#1$.
\added{For the sake of argument, say that we had the following policy for item $\#1$: first position in the recommendation list $\mathbf{r}_{1}^{1} = [0.0, 1.0, 0.0, 0.0]$, and the second position $\mathbf{r}_{1}^{2} = [0.0, 0.5, 0.5, 0.0]$. The RS policy for the first position ($\mathbf{r}_{1}^{1}$) dictates to \emph{always} recommend item $\#2$. The second position policy ($\mathbf{r}_{1}^{2}$) dictates to recommend item $\#2$ 50\% of the time (and $\#3$ 50\%), thus leading us to recommend the user item $\#2$ in different slots (50\% of the time more specifically), which is something obviously \emph{unwanted}.}
\added{On the contrary, if the sum of frequencies over the $N$ different slots for item $\#2$ was at most equal to 1, then we would end up with a policy that can always return a set of \emph{different} recommendations. To avoid this situation, we impose $K^2$, additional constraints; each of these constraints upper-bound the sum of recommendation frequencies of every $r_{ij}^{n}$ (along the $N$ slots-positions), so that it is less than, or equal to 1. This is expressed as follows,}
\begin{align}
    \sum_{n=1}^N r_{ij}^{n} \le 1~\forall~i,~j~\in \mathcal{K}.
\end{align}
Finally, as in the uniform click-through probability case, we prohibit the RS from suggesting the same content the user is currently i.e., $r_{ii}^{n} = 0~\forall~i\in\mathcal{K}$ and $n = 1, \dots, N$ which makes up for $K\cdot N$ additional constraints. Wrapping it all up, the optimization problem in hand is the following:
\begin{problem}[OP-pref]
\label{problem:nuct}
\begin{subequations}
\begin{align}
\underset{\mathbf{R}^{1}, \dots, \mathbf{R}^{N}}{\textnormal{minimize}}~~~& {\mathbf{p}_{0}^{T} \cdot (\mathbf{I} - \alpha \cdot \sum_{n=1}^{N} v_n \cdot \mathbf{R}^{n})^{-1}}\cdot \mathbf{c}\label{eq:objective-pp}\\
\textnormal{subject to}~
 & \sum_{j=1}^{K}\sum_{n=1}^{N} v_{n} \cdot r_{ij}^{n} \cdot u_{ij} \geq q\cdot q_{i}^{max},~\forall~i \in \mathcal{K} \label{quality-con1}\\ 
& \sum_{j = 1}^{K} r^{n}_{ij} = 1,~\forall~i \in \mathcal{K}~\textnormal{and}~n=1,...,N \label{affine-con1} \\
& \sum_{n = 1}^{N} r^{n}_{ij} \le 1,~\forall~\{i,j\} \in \mathcal{K} \label{constraint-implementable} \\
& 0 \le r^{n}_{ij} \le 1 \; (i \ne j), \;\; r^{n}_{ii} = 0~\forall~i,~n. \label{box-con1}
\end{align}
\end{subequations}
\end{problem}

\begin{result}\label{res:ltecNu}
\textbf{\nameref{problem:nuct}} is nonconvex: its objective is nonconvex in the variables $\mathbf{R}^1, \dots, \mathbf{R}^N$ and the constraints are linear similarly to \textbf{\nameref{problem:basis}}. Nonetheless, it can also be cast as an LP through the same transformation steps described in Section \ref{sec:journey-optimality} (for details see Appendix).
\end{result}
Note that for relatively small $N$ (around 2, 3, 4) the scale of the problem remains unchanged, as instead of solving a problem with $K^2$ variables, we will now face a problem with $K + N \cdot K^2$ variables. This extra computational burden though, gives us the flexibility to capitalize on the extra knowledge of the $\mathbf{v}$ statistics as we will see later in the simulation section.



\section{Validation Results}
\label{sec:sims}
In this section we observe the how network-friendly RS can actually increase the performance of the cache hit rate (CHR), and more particularly, we focus on the case where the user session is long.

\subsection{The Different Policies}
Throughout the validation section we will consider three different policies.
\begin{itemize}
    \item $\mathbf{P}_1$: The solution of \nameref{problem:myopic} (assumes the user clicks uniformly among the $N$ recommended items).
    \item $\mathbf{P}_2$: Corresponds to the optimal the solution of \nameref{problem:basis} (assumes the user clicks uniformly among the $N$ recommended items).
    \item $\mathbf{P}_3$: The optimal solution of \nameref{problem:nuct} (assumes the user clicks with $v_i$ to the $i$-th position of the recommendations). 
\end{itemize}
We will use the term \emph{Gain} of policy $X$ over policy $Y$ as the following
\begin{align}
Gain = \frac{\text{CHR}_{X} - \text{CHR}_{Y}}{\text{CHR}_{Y}}\times 100 \%
\end{align}

\subsection{Datasets}
We use datasets of video and audio content, to obtain realistic similarity matrices $\mathbf{U}$. We also create some synthetic traces (with similar properties to the real data) for sensitivity analyses.

\myitem{YouTube FR.}
We used the crawler of~\cite{kastanakis-cabaret-mecomm} and collected a dataset from YouTube. We considered 11 of the most popular videos on a given day, and did a breadth-first-search (up to depth 2) on the lists of related videos (max 50 per video) offered by the YouTube API~\cite{ytAPI}. \added{We picked the 11 most popular videos, as this led to trace of $\approx$ 1K contents (i.e., of similar size to our other traces).} We built the matrix $\mathbf{U}\in\{0,1\}$ from the collected video relations by setting $u_{ij}=1$ if the content $j$ is one or two hops away from $i$ through the related list of $i$.

\myitem{last.fm.}
We considered a dataset from the last.fm database \cite{lastfm}. We applied the ``getSimilar'' method to the content IDs' to fill the entries of the matrix $\mathbf{U}$ with similarity scores in [0,1]. We then keep the largest component of the graph. Finally, as the relation matrix we end up is quite sparse, we saturate the values above $0.1$ to $u_{ij} = 1$. This is done in order to have a meaningful $\mathbf{U}$ matrix with many entries. 

\myitem{MovieLens.} We consider the Movielens movies-rating dataset~\cite{movielens}, containing $69162$ ratings (0 to 5 stars) of $671$ users for $9066$ movies. We apply an item-to-item collaborative filtering (using 10 most similar items) to extract the missing user ratings, and then use the cosine distance ($\in[-1,1]$) of each pair of contents based on their common ratings. We set $u_{ij} = 1$ for contents with cosine distance larger than $0.6$.

\myitem{Synthetic.} We genenerate an \emph{Poisson random} graph of content relations $K = 1000$ nodes, where each content/node has on average 8 neighbors.

To accompany our results, we present Table~\ref{table:datasets}: a table with metrics of the content relation graphs we gathered and of the synthetic one we created (just one of size $K=1000$). We define here as $Neighb(i)$ the number of neighbors content $i$ has in the relations graph $\mathbf{U}$.

\begin{table}
\centering
\caption{Dataset Statistics}
\label{table:datasets}
\begin{tabular}{l|c|c|c|c}
			& {Nodes} & {Edges} & {$\overline{Neighb}$} & {$std(Neighb)$}\\
\hline \hline
{MovieLens}			&{1060} &{20162} &{19.02} & {19.61}\\
\hline
{YouTube FR}		&{1059} &{3516} &{3.32} & {9.15}\\
\hline
{last.fm}			&{757} &{5964} &{7.87} & {5.53}\\
\hline
{Synthetic}			&{1000} &{7980} &{7.98} & {2.74}\\
\end{tabular}
\end{table}





\myitem{Simulation Setup.} 
Here, we consider a simple scenario with $c_{i} \in \{0,1\}$, which corresponds to minimizing the cache misses, or equivalently maximizing the CHR. In all of our presented plots, in the $y$-axis we depict the CHR and on the $x$-axis we vary different problem parameters.
Importantly our metric for Section \ref{subsec:part1} will be the CHR of \emph{a long session of requests} as calculated by the objective function Eq. (\ref{eq:objective-infinite-step}) and for Section \ref{subsec:part2} the one calculated from Eq. (\ref{eq:objective-pp}).
Moreover, we assume $\alpha = 0.5 - 0.9$ (\cite{gomez2016netflix}), a Zipf popularity distribution with exponent $s$ (in the range of 0.4-0.8), and that the $C$ most popular contents of the catalog, according to $\mathbf{p}_0$ are cached. 
We highlight that we split the simulations section in two subsections; first one exploring results related to \textbf{\nameref{problem:basis}}, and the second one to \textbf{\nameref{problem:nuct}}.

\subsection{Simulations: Clicking Uniformly over the recommendations}\label{subsec:part1}

In the figures that follow, we vary key parameters of the problem by keeping fixed the remaining ones and see how the CHR metric evolves.


\myitem{Impact of Quality of Recommendations (\textit{q}).}
The most fundamental parameter of the paper is the quality of recommendations a RS provides to its user. To this end, in the first simulation result, see Fig.~\ref{fig:quality}, we increase the quality \% ($x$-axis) constraint and present the CHR performance ($y$-axis) of the two schemes $\mathbf{P}_1$ and $\mathbf{P}_2$ along with the relative gain as described earlier. 
We keep the ratio cache size/catalogue size ($C/K$) and number of recommendations ($N$) fixed throughout. 
Naturally, we observe that less strict quality constraint allows higher flexibility in favoring network-friendly content. 
Hence, Fig.~\ref{fig:quality} shows that for lower values of $q$, the CHR increases both under $\mathbf{P}_2$ and $\mathbf{P}_1$. However, when high-quality recommendations are desired, e.g., $q \ge 70\%$, $\mathbf{P}_2$ heavily outperforms the baseline $\mathbf{P}_1$. This can be easily seen through the curves of \emph{relative gain}, where in all datasets, at $q$ = 95\% we observe a gain of at least 40\%, in Figs.~\ref{fig:sensitivity-quality-synthetic}, \ref{fig:sensitivity-quality-youtubefr}, \added{and more than 50\% in Figs.~\ref{fig:sensitivity-quality-mvlns}, \ref{fig:sensitivity-quality-last}.}

\myitem{Observation 1.}
The impact of $q$ is the most fundamental result of this work. As $q$ grows and the constraint becomes tighter, the margin for cache gain becomes smaller and smaller. That is when employing a policy equipped with look-ahead capabilities shines the most and when a much less sophisticated method fails to lay-over useful content paths through the recommendation mechanism.

\myitem{Observation 2.}
Note here that our relation matrices $\mathbf{U}$ are \emph{binary} in the sense that a content is either related or unrelated. This hints why as $q$ grows, the \emph{Gain} of $\mathbf{P}_2$ over $\mathbf{P}_1$ grows. As $q$ becomes larger, essentially $\mathbf{P}_1$ selects \emph{at random} the related contents it chooses in order to satisfy the constraint whereas $\mathbf{P}_2$ makes its decision based on possible future trajectories of the user.



\myitem{Impact of Number of Recommendations (\textit{N}).}
The YouTube mobile app usually pops 2-3, related videos before the user finishes her current streaming session. For such values of \textit{N} = 2 or 3, in Fig.~\ref{fig:sensitivity-nb-of-recs-uni}, $\mathbf{P}_2$ performs more than 50\% better than the $\mathbf{P}_1$ ($MPH = 20.28\%$), whose performance is not significantly affected by $N$.

\myitem{Observation 3.}
Comparing the two schemes in Fig.~\ref{fig:sensitivity-nb-of-recs-uni} reveals an interesting insight: it is more efficient to nudge the user towards network-friendly content by narrowing down her options $N$ for both network-friendly policies.
Note here that the RS's goal is to find $N$ items that are of high $u_{ij}$ value and are also cached. With the increase of $N$ what happens is the following: suppose that the RS found this \emph{one} item that is useful in all dimensions (cached and related). If $N = 1$, then we can assign the full budget to this particular item and get a cache hit, whereas if $N$ is larger, then due to the randomness with which the user clicks, it is much harder for the RS to drive her towards the neighborhood it wants.

\myitem{Observation 4.}
In the previous observation we briefly explained why the CHR drops for larger $N$ for \emph{any} network-friendly policy. However, it is evident that $\mathbf{P}_2$ is more sensitive in this parameter. This can be explained by the fact that the aforementioned situation of the \say{useful content} is basically just the tip of the iceberg and a very favorable scenario. Essentially what happens most of the time, is that the RS \emph{does not} have such contents, and here is where $\mathbf{P}_2$ does things better: it looks deeper into the session and finds which contents lead to \say{useful} contents in future requests. 

\myitem{Impact of $\alpha$.}
The key message of Fig.~\ref{fig:sensitivity-alpha-lastfm} ($MPH = 5\%$) is that a \emph{multi-step vision} method such as $\mathbf{P}_2$ takes into account the knowledge of user's behavior ($\alpha$). This can be mainly seen by the superlinear and linear improvement of $\mathbf{P}_2$ and $\mathbf{P}_1$ methods respectively.

\myitem{Impact of zipf parameter.}
\added{In the previous plots, we have assumed a zipf parameter from 0.5 to 0.7. In the result we show next, we fix all parameters and increase the popularity vector skewdness. As a result, the hit rate of both $\mathbf{P}_1$ and $\mathbf{P}_2$ will increase, as the caching is based solely on popularity. Interestingly, the hit rate increases superlinearly, but similarly to earlier, our focus is on the performance gain when the RS policy has look-ahead capabilities. The result of Fig.~\ref{fig:sensitivity-zipf-mvlns} for the Movielens dataset, validates that the proposed policy outperforms the myopic one in the entire range of the simulated values.}

\begin{figure}
\centering
\subfigure[Param.: $N=2,~s=0.7,~C/K=1.5\%,~\alpha=0.8$,~$MPH$=6.5\%]{\includegraphics[width=0.4\columnwidth]{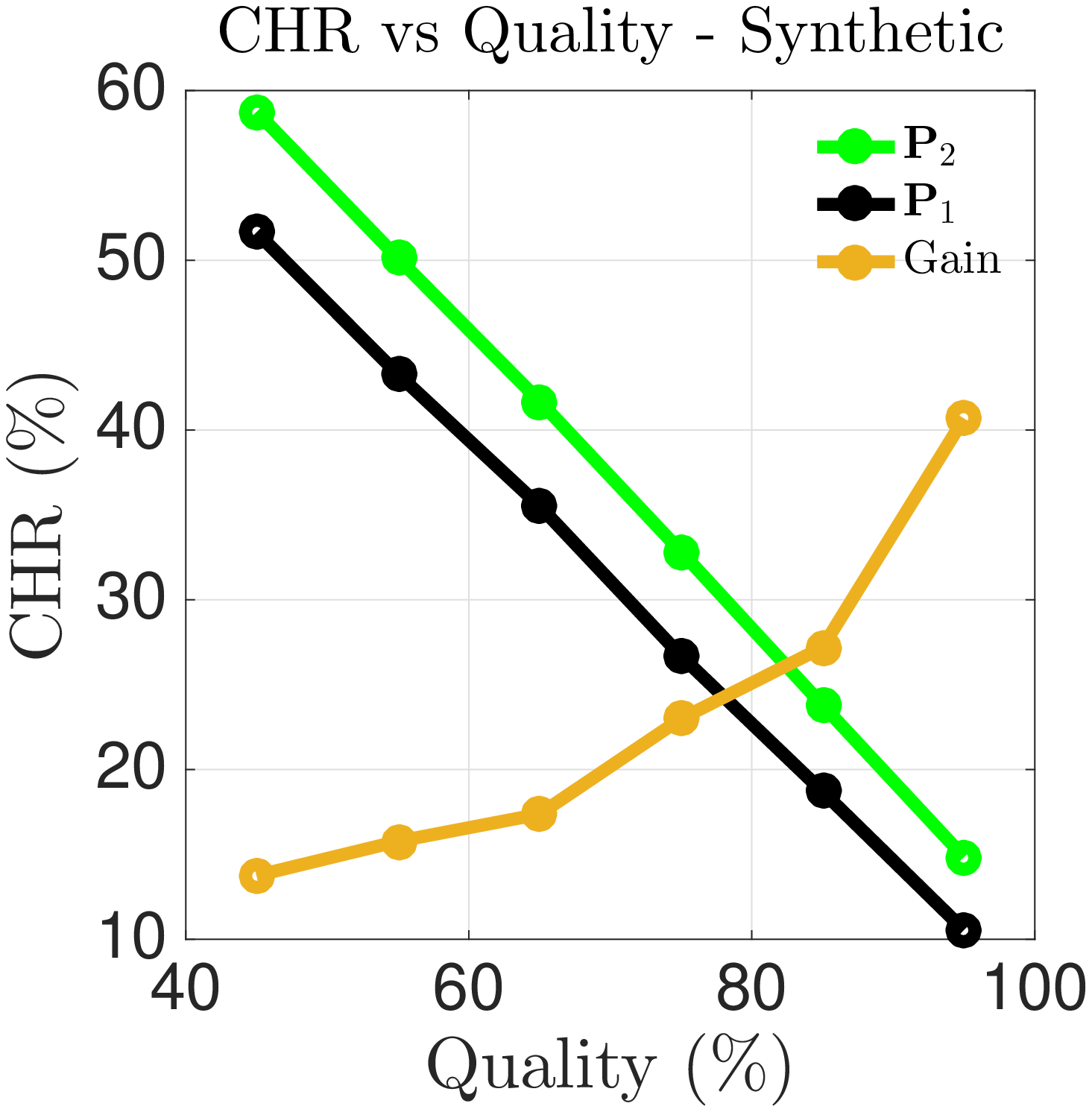}\label{fig:sensitivity-quality-synthetic}}
\hspace{0.08\columnwidth}
\subfigure[Param.: $N=3,~s=0.5,~C/K=0.93\%,~\alpha=0.8$~$MPH$=5.3\%]{\includegraphics[width=0.4\columnwidth]{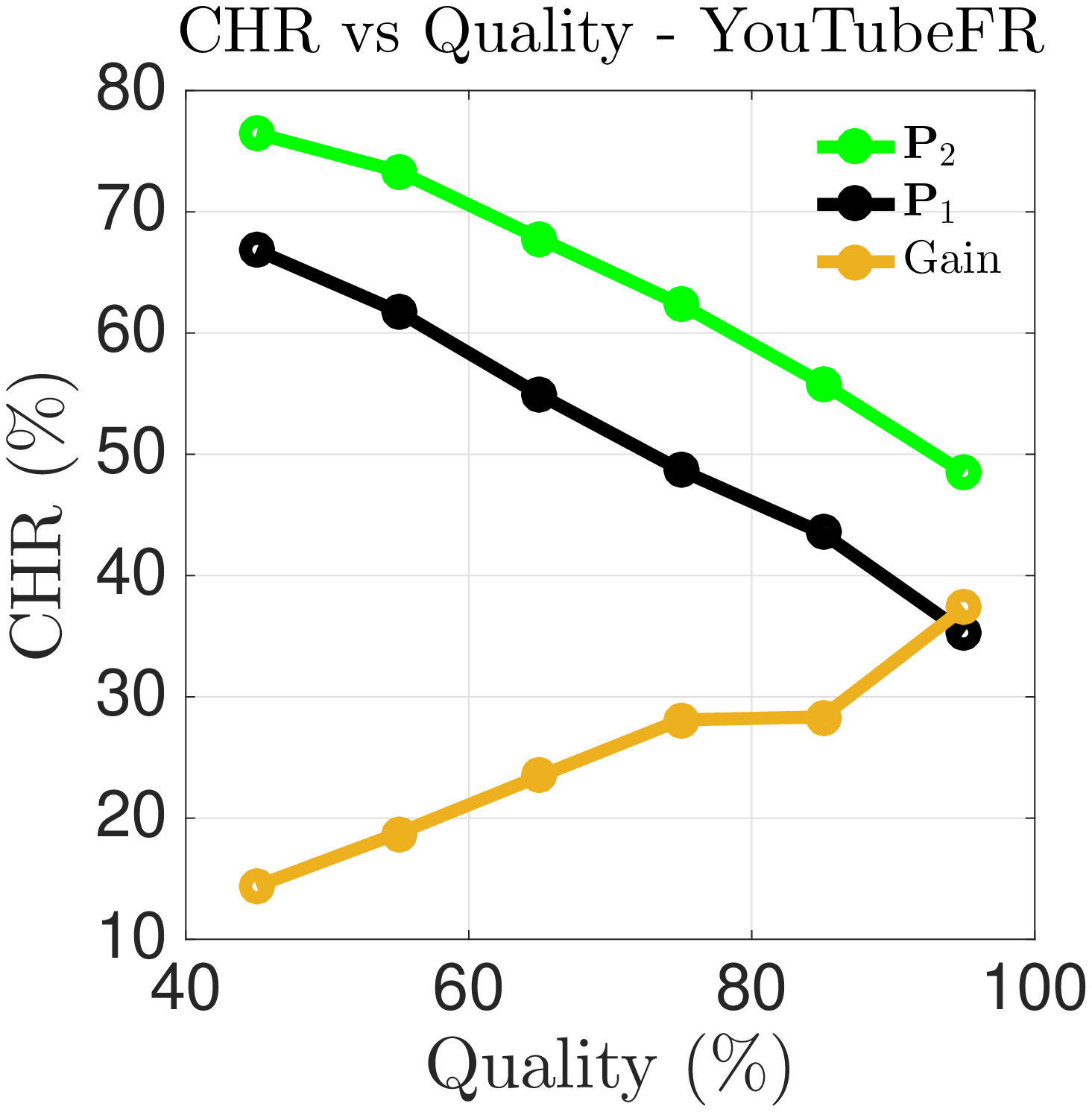}\label{fig:sensitivity-quality-youtubefr}}
\hspace{0.08\columnwidth}
\subfigure[\added{Param.: $N=2,~s=0.6,~C/K=1\%,~\alpha=0.7$,~$MPH$=11.5\%}]{\includegraphics[width=0.4\columnwidth]{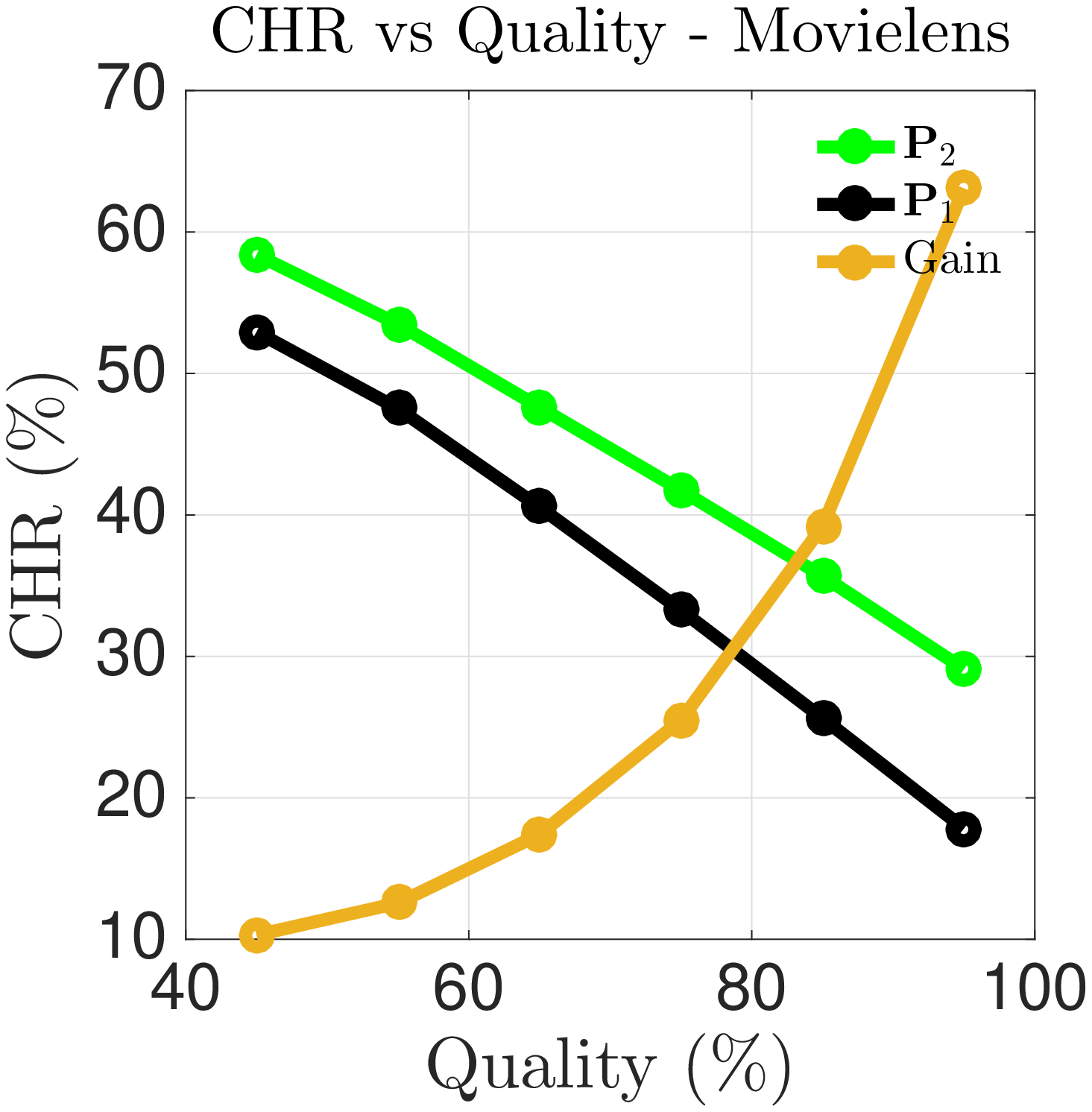}\label{fig:sensitivity-quality-mvlns}}
\hspace{0.08\columnwidth}
\subfigure[\added{Param.: $N=3,~s=0.6,~C/K=1\%,~\alpha=0.7$,~$MPH$=10.8\%}]{\includegraphics[width=0.4\columnwidth]{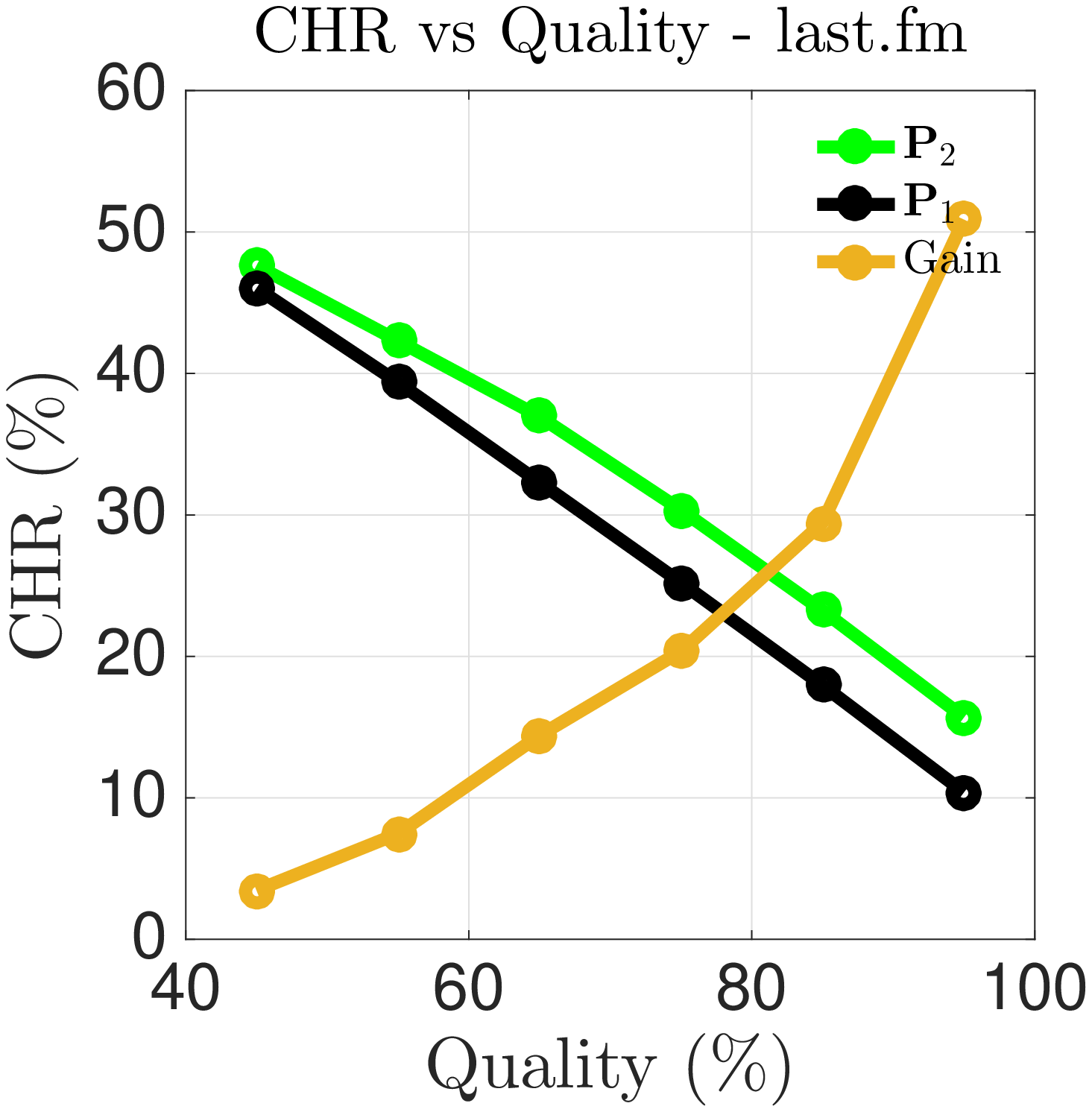}\label{fig:sensitivity-quality-last}}
\vspace{-2.5mm}
\caption{Cache Hit Rate vs Quality, (a): $K=1000$,~Synthetic, (b): $K=1100$,~YouTubeFR, (c): $K=1060$,~Movielens, (d): $K=757$,~last.fm}\label{fig:quality}
\end{figure}

\begin{figure*}
\centering
\subfigure[Param.:  $q=85\%,~s=0.7,~C/K=1.5\%,~\alpha=0.8$]{\includegraphics[width=0.4\columnwidth]{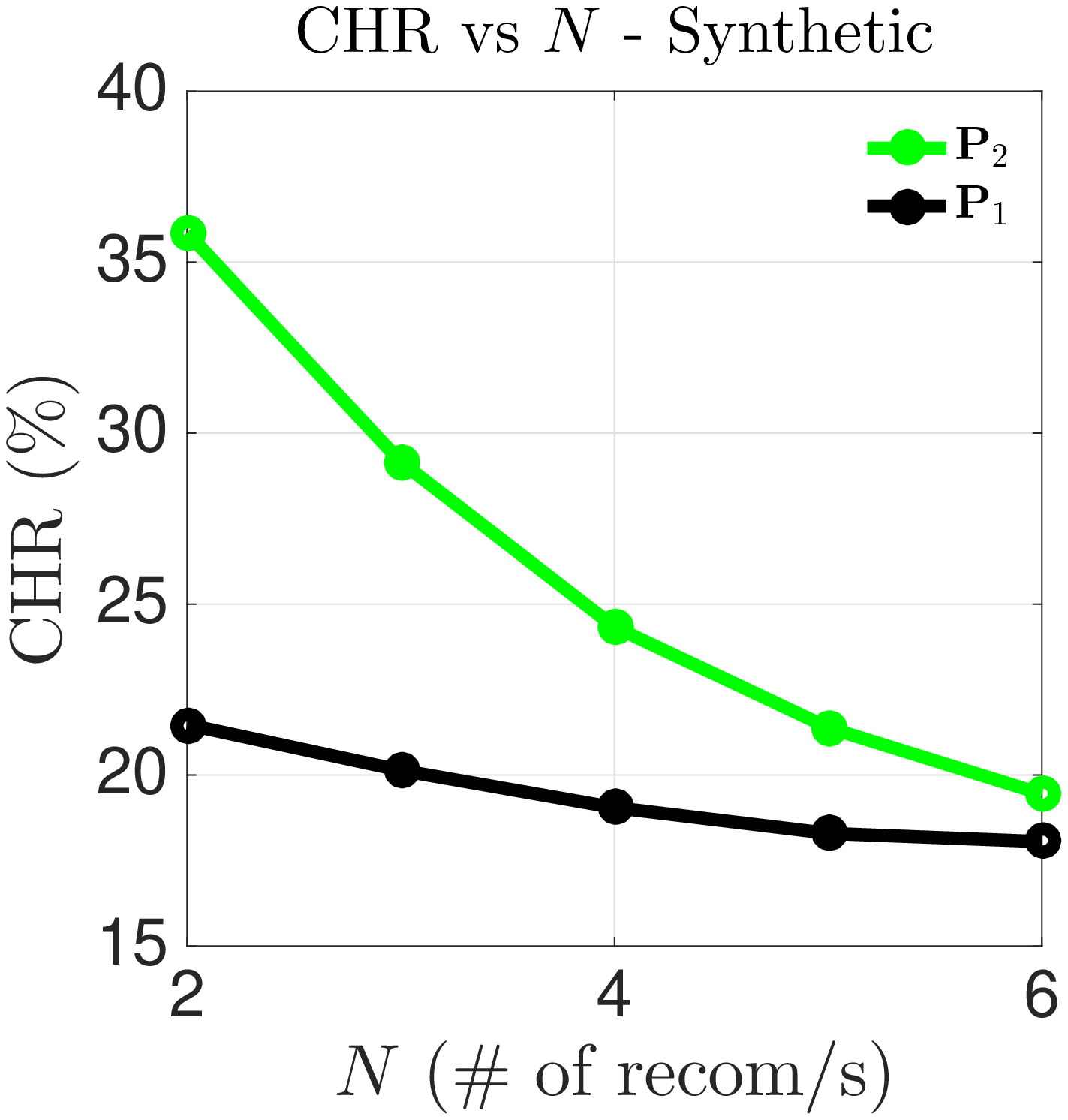}\label{fig:sensitivity-nb-of-recs-uni}}
\hspace{0.08\columnwidth}
\subfigure[Param.: $N=2$,~$C/K=1\%$,~$s=0.6$,~$q=90\%$]{\includegraphics[width=0.4\columnwidth]{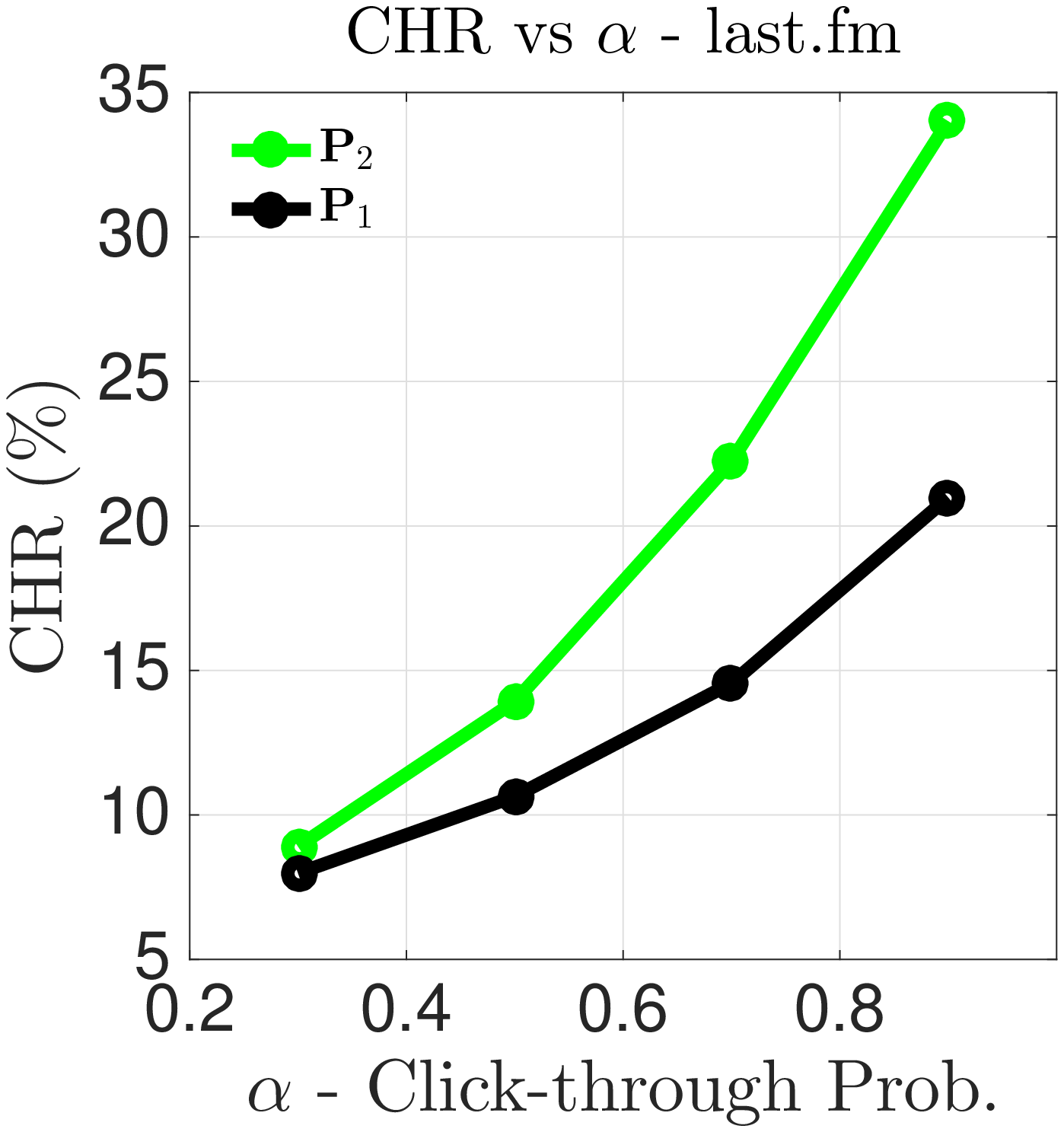}\label{fig:sensitivity-alpha-lastfm}}
\hspace{0.08\columnwidth}
\subfigure[\added{Param.: $N=3$,~$\alpha=0.7$,~$q=85\%$, $C/K=1\%$}]{\includegraphics[width=0.4\columnwidth]{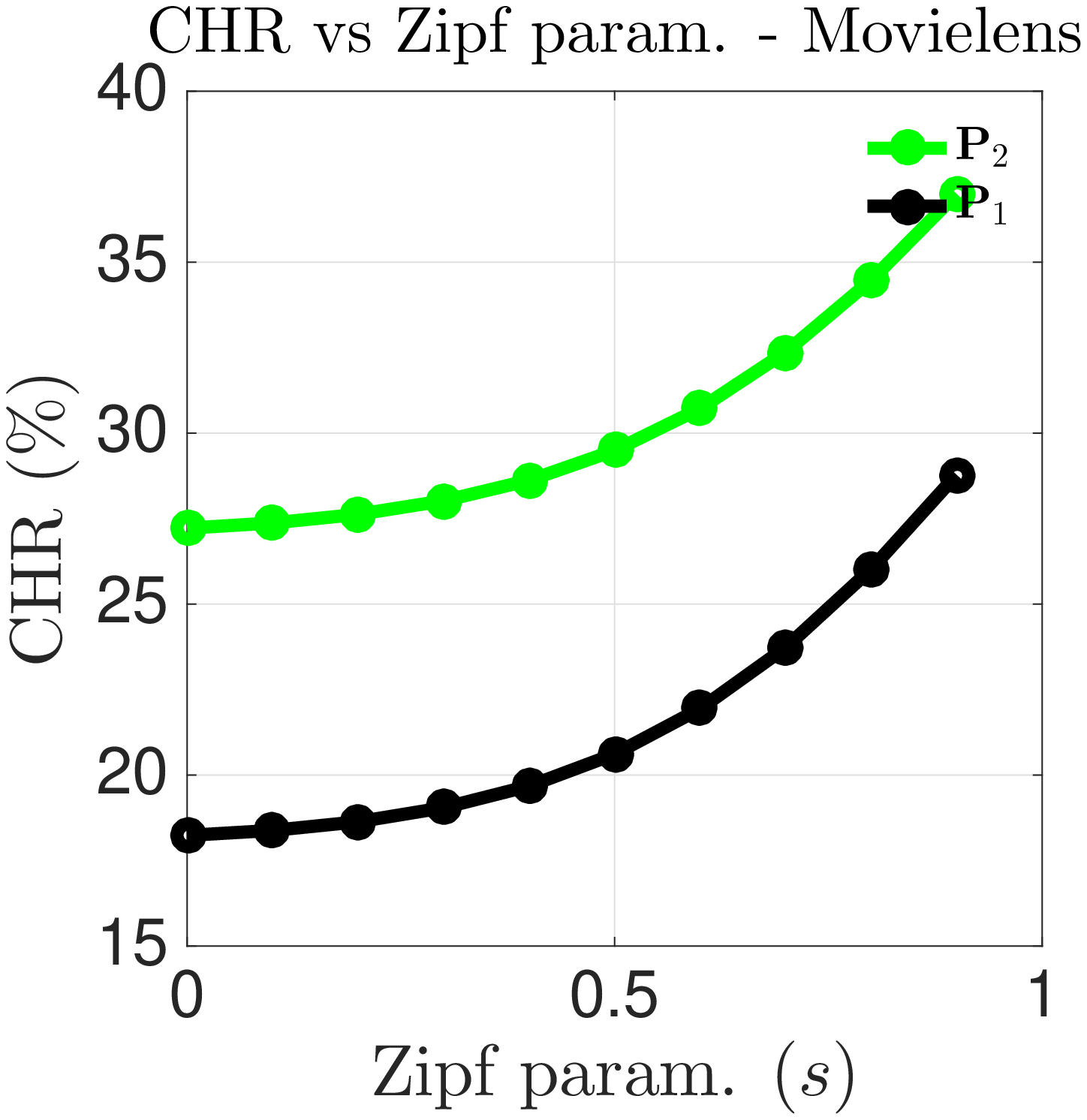}\label{fig:sensitivity-zipf-mvlns}}
\caption{Cache hit Rate vs (a): $N$ (Synthetic), (b): $\alpha$ (last.fm), (c): zipf param. (Movielens)}
\end{figure*}




\subsection{Simulations: Clicking Non-Uniformly over the recommendations}\label{subsec:part2}
In Section \ref{sec:optimization_methodology}, we establish theoretically why our method has clear benefits in the regime of \emph{long user session} by showing optimality of $\mathbf{P}_2$. In the previous subsection, we presented some results to show \emph{in terms of actual numbers}, how much of an improvement a method with deep vision such as $\mathbf{P}_2$ can have, over a short-sighted one such as $\mathbf{P}_1$.

In this subsection, we will slightly change direction and try to understand whether the knowledge of $\mathbf{v}$ can deliver even further gains in the regime of \emph{long sessions}. To this end, we will investigate policies $\mathbf{P}_3$ and $\mathbf{P}_2$, which both have look-ahead capabilities, but the first is aware of $\mathbf{v}$ while the second assumes uniform click over the $N$ recommended items. Moreover, we will employ as a parameter, the entropy of the pmf $\mathbf{v}$ which is defined as

\begin{align}
    H_{\mathbf{v}} = H(v_1,..,v_N) = -\sum_{n=1}^{N} v_n \cdot \log(v_n). 
\end{align}

According to $H_{\mathbf{v}}$, a user who clicks uniformly between any of the $N$ positions has the maximum entropy $H_{\mathbf{v}} = 1$, whereas a user who clicks only in one position (e.g., the first up on the screen) has the minimum entropy $H_{\mathbf{v}} = 0$ as she clicks deterministically.

In Figs.~\ref{fig:opt-agno-mvlns},~\ref{fig:relative-vs-cars} (see Table~\ref{table:parameters} for simulation parameters), we assume behaviors of increasing entropy; starting from users that show preference on the higher positions of the list (low entropy), to users that select uniformly recommendations (maximum entropy). In our simulations, we have used a zipf distribution~\cite{RecImpact-IMC10} over the $N$ positions and by decreasing its exponent, the entropy on the $x$-axis is increased. As an example, in Fig.~\ref{fig:opt-agno-mvlns}, lowest $H_\mathbf{v}$ corresponds to a vector of probabilities $\mathbf{v} = [0.8,0.2]$ (recall that $N = 2$), while the highest one on the same plot to $\mathbf{v} = [0.58,0.42]$.

Thus, we initially focus on answering the following basic question: \emph{Is the non-uniformity of users' preferences to some positions helpful or harmful for a network friendly RS?} From Figs.~\ref{fig:opt-agno-mvlns},~\ref{fig:relative-vs-cars}, it becomes very clear that the lower the entropy, the more the $\mathbf{v}$-awareness helps $\mathbf{P}_3$ gain over the agnostic policy $\mathbf{P}_2$.

\begin{table}
\centering
\caption{Parameters: Figure \ref{fig:no1}.} \label{table:parameters}
\begin{tabular}{l|l|l|l|l|l}
			& {$q \%$} & {$zipf(s)$} & {$\alpha$} & {$N$} & {MPH \%} \\
\hline \hline
{MovieLens}			&{80} &{0.8} &{0.7} & {2} & {23.26} \\
\hline
{YouTube FR}		&{95} &{0.6} &{0.8} & {2} & {12.17}\\
\hline
{last.fm}			&{80} &{0.6} &{0.7} & {3} & {11.74}\\
\end{tabular}
\end{table}

\myitem{Observation 1.} We observe by these plots that a skewed $\mathbf{v}$, is helpful for the NFRS. In the extreme case where  $\mathbf{v}$ is extremely skewed ($H_{\mathbf{v}} \to 0$), where virtually this means $N = 1$, the user clicks deterministically, 
and the optimal hit rate becomes maximum. 
This can be also validated in
Fig.~\ref{fig:sensitivity-nb-of-recs}, 
where for increasing entropy the 
the hit rate decreases and its maximum is attained for $N=1$.

\begin{figure} 
\centering
\subfigure[Absolute Perf.]{\includegraphics[width=0.4\columnwidth]{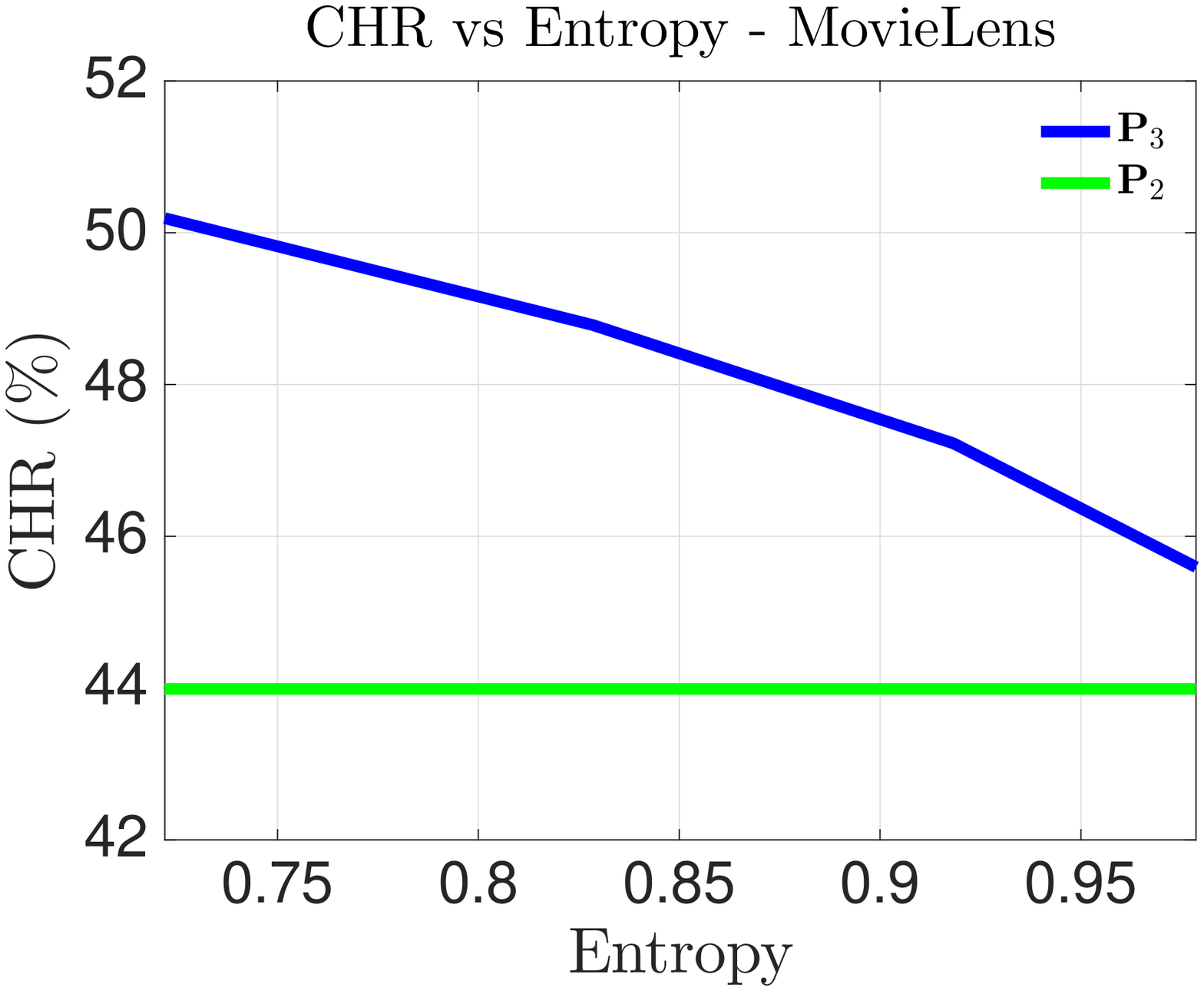}\label{fig:opt-agno-mvlns}} 
\hspace{0.08\columnwidth}
\subfigure[Relative Gain \%]{\includegraphics[width=0.4\columnwidth]{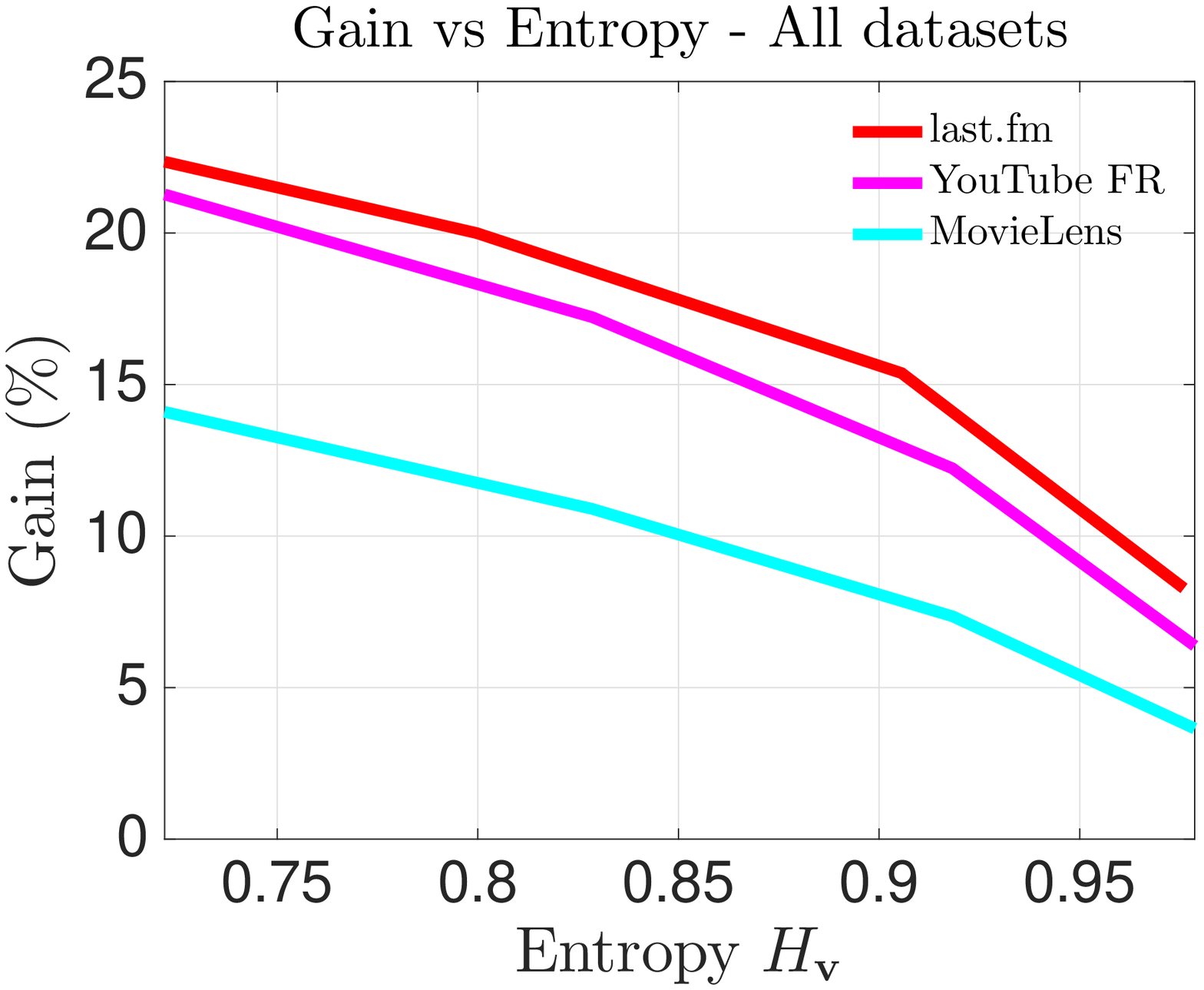}\label{fig:relative-vs-cars}}
\vspace{-2.5mm}
\caption{Cache Hit Rate vs~ $H_{\mathbf{v}}~(C/K\approx1.00\%)$}
\label{fig:no1}
\end{figure}

Lastly, we investigate the sensitivity of $\mathbf{P}_3$ and $\mathbf{P}_2$, against the number of recommendations ($N$). 
In Fig.~\ref{fig:sensitivity-nb-of-recs}, we present the CHR curves of the two schemes for increasing $N$, where we keep constant the distribution $\mathbf{v} \sim zipf(0.9)$. As expected, for $N = 1$ (e.g., YouTube autoplay scenario) $\mathbf{P}_2$ and the proposed scheme coincide, as there is no flexibility in having \emph{only one} recommendation. 


\myitem{Observation 2.} For large $N$, $\mathbf{P}_2$ may offer the \say{correct} recommendations (cached or related or both), but it cannot place them in the right positions, as there are now too many available spots. 
In contrast, the scheme $\mathbf{P}_3$ recommends the \say{correct} contents, and places the recommendations in the \say{correct} positions. 
Fig.~\ref{fig:heatmap}, strengthens even more the Observation 2; its key conclusion is that with high enough enough $s$ (i.e. low $H_{\mathbf{v}}$) and more than 2 or 3 recommendations, while $\mathbf{P}_2$ aims to solve the multiple access problem, its \emph{position preference unawareness} leads to highly suboptimal recommendation placement, and thus a severe drop of its CHR performance compared to the $\mathbf{P}_3$.

\begin{figure} \label{fig:combination-plots}
\centering
\subfigure[$q=80\%,~K=400$]{\includegraphics[width=0.4\columnwidth]{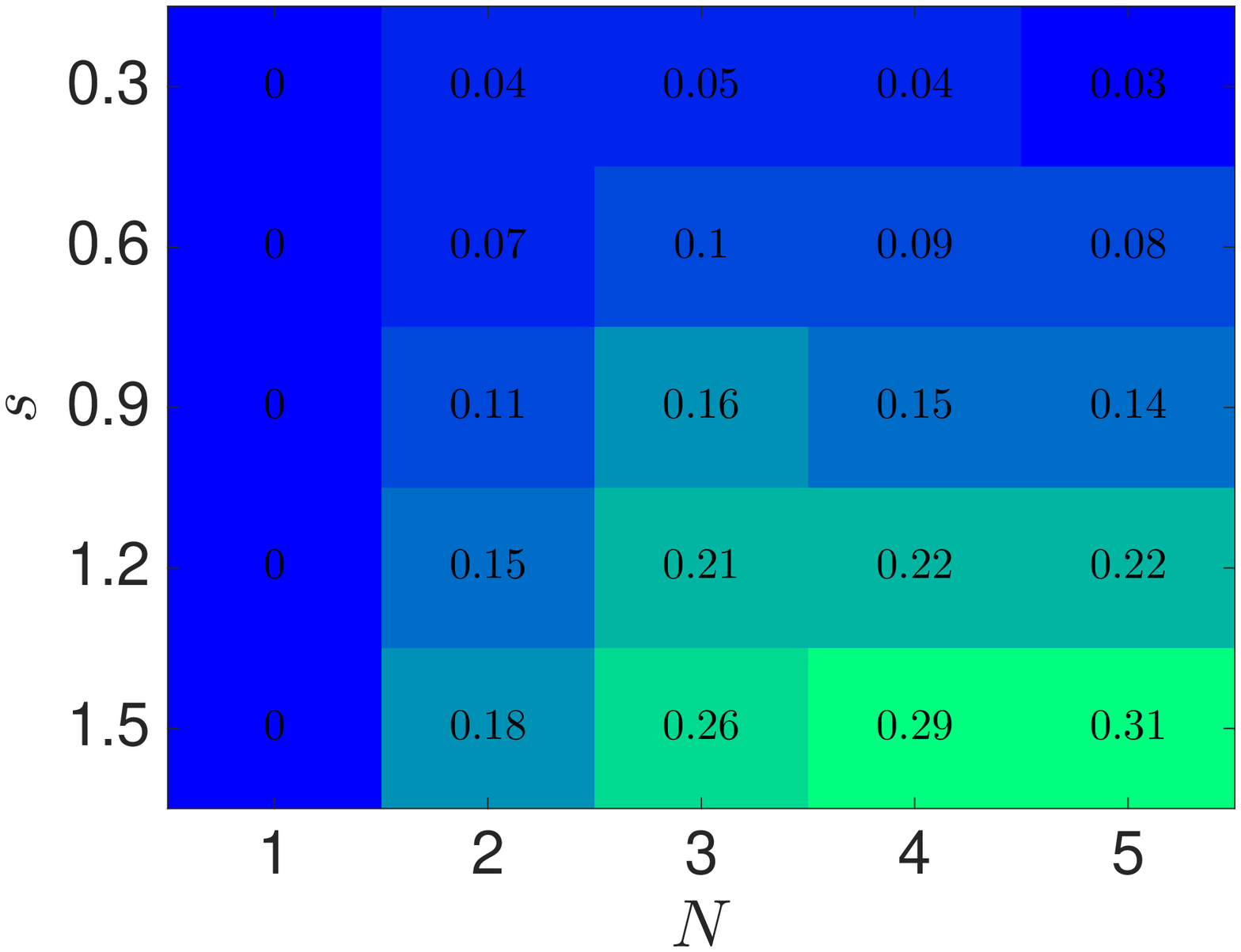}\label{fig:heatmap}}
\hspace{0.08\columnwidth}
\subfigure[Absolute Perf. ($q=90\%,~s = 0.6,~MPH=11.24\%$)]{\includegraphics[width=0.4\columnwidth]{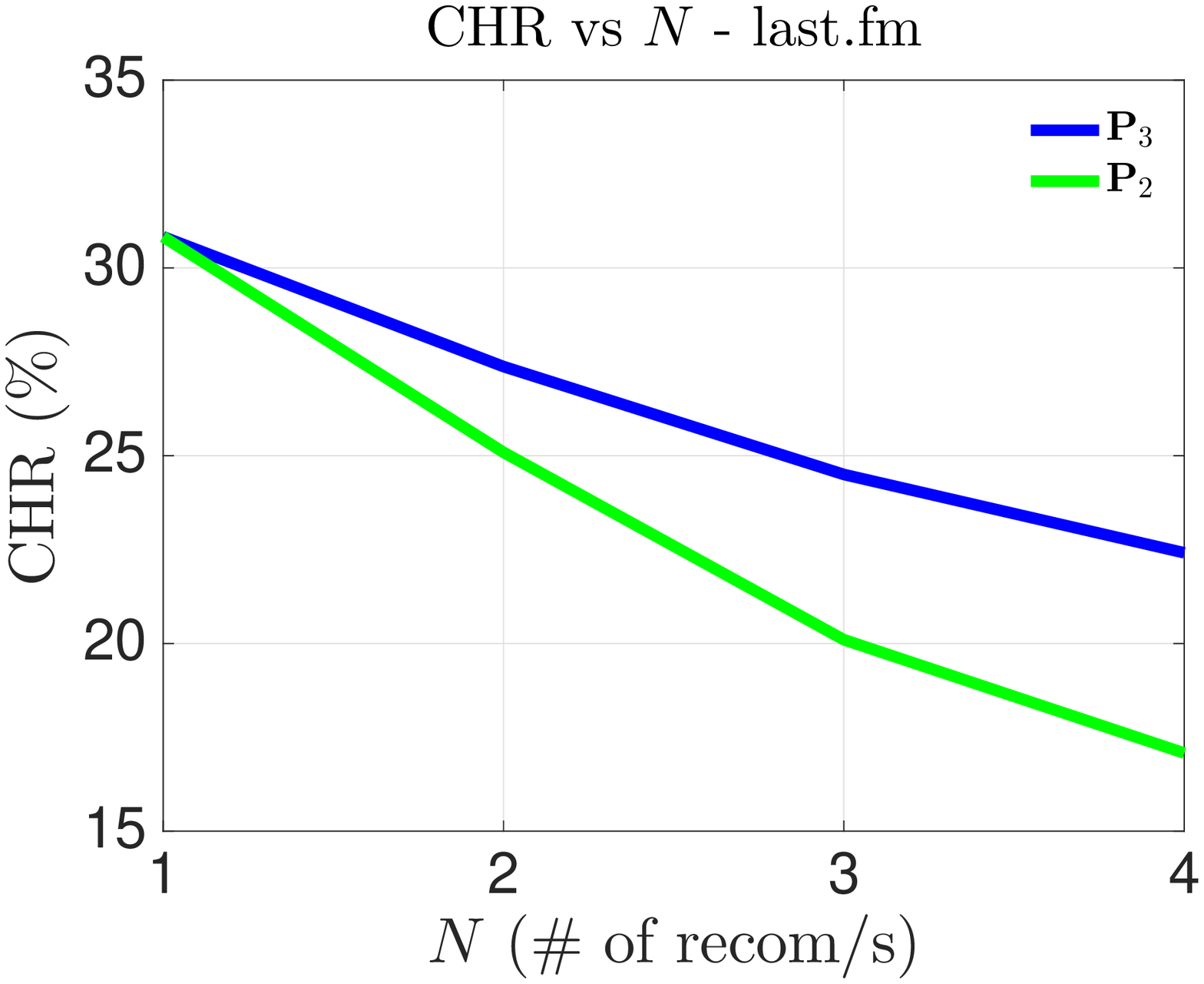}\label{fig:sensitivity-nb-of-recs}}
\vspace{-2.5mm}
\caption{(a:) Relative Gain vs $(N,\beta)$ and (b:) Cache Hit Rate vs~$N$ ~($C/K\approx 1.00\%,~\alpha=0.7$)}
\end{figure}

\section{Related Work}
\label{sec:related}
\myitem{Recommendation and Caching Interplay.} 
The relation between recommendation systems and caching has only recently been considered~\cite{chatzieleftheriou2019TMC, cache-centric-video-recommendation,munaro2015content,sermpezis2018soft,guo2017caching,liu2018learning,kastanakis-cabaret-mecomm,song2019interactions}. 
\added{The problem of optimizing \emph{jointly} caching and recommendations in a static (IID requests and static placement) setting, for a user accessing the home screen of an application has recently been considered~\cite{chatzieleftheriou2019TMC}, and a heuristic approach was suggested for its solution.}
\added{In~\cite{cache-centric-video-recommendation}, the authors propose a simple yet effective algorithm according to which, they randomly inject cached content in the \say{related list} of YouTube, which comes with a small decrease of recommendations quality, and an interesting increase in cache hit rate.}
\added{In~\cite{sermpezis2018soft}, the focus is on content caching, and the authors introduce the notion of \say{soft cache hit}. Essentially, the aim is to cache the items that are popular, but also hot in the sense of \emph{frequently appearing in the related list of other contents}.} 
%
\added{A more measurement-oriented approach is found in~\cite{kastanakis-cabaret-mecomm}, where a true cache-aware RS was implemented, and based on the authors findings, high gains can be achieved when the RS suggests content from the cache.}
In~\cite{liu2018learning}, a single access user is considered. There, caching policy is based on machine learning techniques, where the users' \textit{behavior} is estimated through the users' interaction with the recommendations and this knowledge is then exploited at the next edge cache updates. 
%
%
In \cite{song2019interactions}, the authors introduce a first of its kind formulation of a wireless recommendations problem in a contextual bandit framework, which they call contextual broadcast bandit. In doing so, they propose an epoch-based algorithm for its solution and show the regret bound of their algorithm. Interestingly, they conclude that the user preferences/behaviors learning speed is proportional to the square of available bandwidth.
The work in \cite{chatzieleftheriou2019TMC} considers the \emph{joint} problem of caching and recommendations in a static setting. Similarly in~\cite{liu2018learning}, a single access user is considered. There, caching policy is based on machine learning techniques, where the users' \textit{behavior} is estimated through the users' interaction with the recommendations and this knowledge is then exploited at the next edge cache updates. 

\myitem{Recommendations for the Long Term Cost.}
In~\cite{giannakas2018show}, the problem of RS design for minimum network cost in the long run first appears. We formulated a \emph{nonconvex} problem, and proposed a heuristic ADMM algorithm on the nonconvex formulation which comes \emph{with no theoretical guarantees}.
\added{In~\cite{giannakas2019wiopt}, the preliminary version of this work, we convexified the same problem and offered an \emph{optimal} solution under an LP formulation. Here, through means of simulation, we show that the LP framework is \emph{computationally more efficient} than the ADMM previously used in~\cite{giannakas2018show}, as State of Art LP solvers can be used to tackle it.}

Equally importantly, we extended the framework in order to capture the importance of the recommendations placement in the GUI, again \emph{optimally}~\cite{giannakas2019wiopt}. Note that several of the aforementioned studies~\cite{giannakas2018show,sermpezis2018soft,munaro2015content} ignore the preference that users exemplify to some the position of recommendations over others. The work in~\cite{chatzieleftheriou2019TMC}, while taking into account the ranking of the recommendations in the modeling and their proposed algorithm, in the simulation section they assume that the boosting of the items is equal. 

\added{Overall, this work serves as a unification, as it extends and generalizes our previous works by including new results from additional datasets, that further strengthen the case of LP-based NFR in long user sessions.}




\myitem{Optimization Methodology.} 
The problem of optimal recommendations for multi-content sessions, bares some similarity with PageRank manipulation~\cite{ermon2014designing,fercoq2013ergodic,avrachenkov2006effect}. The idea there is to choose the links of a subset of pages (the user has access to) with the intention to increase the PageRank of some targeted web page(s). Although that problem is generally hard, some versions of the problem can also be convexified~\cite{fercoq2013ergodic}. 


\section{Discussion}
\label{sec:discussion}
\added{In this section we discuss some open problems that are related to NF-RS design. It is important to highlight that most of the following problems are essentially generalizations of what we have presented in this work, and in general not addressable by our framework.} 

\myitem{Joint optimization of caching and recommendation: Long user sessions.}
A reduction of the general network-friendly recommendations is to consider it in the context of caching, i.e., the network cost becomes the cache miss probability. In our framework, it is not specified whether we focus on the edge caching problem (single or femto) or a CDN-like architecture. We simply need from the operator, who is aware of the \emph{network state}, the set of content costs and we guarantee to deal with the recommendation side of things. However, a problem that is timely problem is the one of joint cache placement and recommendation under the regime of many content requests. The problem we studied in this paper, could be cast in the single cache framework where the $\mathbf{c}$ is the hot vector deciding which contents should be cached and which not. It is quite evident though that this problem is a mixed integer program ($\mathbf{c}$ is a binary vector), where the objective is quadratic (caching decisions are multiplied with the recommendations matrix), and therefore lies in the category of \emph{hard} problems. Nonetheless, heuristic methods that alternatingly optimize over the two variables still apply; \added{one could even use the method proposed in this paper for the minimization of the recommendation variables}. On the other hand, the joint \emph{femtocache} content placement and recommendation would require additional modeling work since our optimization objective (if seen under the lens of caching) does not consider different base stations.

\myitem{Different user models.}
In this study, we focused on a simple, yet quite general user behavior. The user model we discussed is fully parameterized and essentially if one has measured data/statistics about the crucial quantities like $\alpha$ (user willingness to click on recommended items) or under which $q$ is the user happy, our modeling and optimization approach enjoys optimality guarantees. However, recommendations and how they affect the user's request pattern is an open problem. A major assumption we made here, was that users have fixed $\alpha$, which could be considered as unrealistic. In practice, users have a more reactive behavior towards their recommendations; receiving good recommendations might increase their instantaneous $\alpha$ or bad ones might have the opposite effect. Thus, starting off from the Markov Chain framework, and based on the modeler belief for the user, one could engineer different models. A first nontrivial extension to our model would be to define a user whose $\alpha$ (i.e., clickthrough on recommendations) is policy dependent; thus instead of modeling the $\alpha$ as constant and incorporating the quality of recommendations as an external hard constraint, we could embed it in the content transition and make $\alpha$ a function of the policy. Moreover, here we have discussed the cases where the item selection is either random or depends in an i.i.d manner from the position the item is placed. An additional modeling twist would be to allow the item selection to be based on the similarities $u_{ij}$ of the items. However, these extensions would further complicate things as they would introduce additional nonconvexities to our \nameref{problem:basis}.


\myitem{Dynamic (caching) conditions.}
According to our assumptions, the network state, i.e., the costs of the contents remain the same throughout the course of the day. However, in many practical scenarios the operators already have their infrastructure inside the network where some dynamic caching policy such as LRU, $q$-LRU or LRU-$K$~\cite{che2002hierarchical, o1993lru} is pre-implemented. A big challenge that remains unresolved is the one of designing optimal recommendation policies over a network of LRU-like cache replacement policies.
In our opinion, this framework has two open questions. First of all, as dynamic cache policies are not easy to analyze, approximations are typically employed in order to acquire meaningful metrics. In the dynamic cache setting, the well known \emph{Che} and \emph{time-to-live} approximations do not capture the effect of a RS over the average lifetime of contents inside the cache. So we consider analyzing the effect of \emph{any} RS on the cache lifetime statistics to be an interesting topic on its own. Furthermore, as we know the RS has the power to \emph{shape} the content popularity and therefore which contents \emph{will live longer} inside the cache. Nonetheless, now the stage for the recommendation algorithm is much more hostile. Imagine we kept our Markov chain framework and \emph{augment its state space to include also the cache configuration}. Then plausibly, one would want to find the static (computed offline) optimal recommendation policy under an LRU caching policy. However, the cache configurations is the exploding in size unique permutations of $\perm{K}{C}$, resulting to a state space of size $K \cdot \perm{K}{C}$. It becomes obvious that even for a moderate problem size such as $K = 200$ and $C = 3$ we have $\approx 1.5$ billion states, and over \emph{each one of them} we should make decisions. Thus, either the content lifetime approximation should be somehow used as a proxy to the content cost or maybe even some function approximation in order to discover what features of the problem \emph{really matter}.


\myitem{Unknown, static or dynamic, user behavior and Learning.}
The solution mindset we employed is the one of model-based optimization. As such, our solution might need re-tuning with the change of $\alpha$ during the course of the day. Problems like this, could be better handled through learning-based (a.k.a. model-free) optimization methods. Although this path sounds very appealing, there are a few pitfalls into it. As an example, if we employ a Q-Learning based algorithm, a question that arises is \say{Will it converge soon enough?} Maybe by the time it has learned the user behavior, the network state has changed and then all the effort on learning the user might have gone to waste. \added{Thus, such approaches cannot be applied straightforwardly in our problem and could be considered as new problems on their own. A promising idea for such a demanding problem would be learn policies through function approximation, which can generalize~\cite{sutton1998introduction}. This line of problems could also be faced through Online Convex Optimization methods, which are well known to optimize some dynamically changing function, even if it is picked by an adversary~\cite{hazan2019introduction}.}




\subsection*{Acknowledgments}
This research is funded by the ANR \say{5C-for-5G} project under grant ANR-17-CE25-0001, and the IMT F\&R, \say{Joint Optimization of Mobile Content Caching and Recommendation} project. It is also co-financed by Greece and the European Union (European Social Fund- ESF) through the Operational Programme ``Human Resources Development, Education and Lifelong Learning'' in the context of the project ``Reinforcement of Postdoctoral Researchers - 2nd Cycle'' (MIS-5033021), implemented by the State Scholarships Foundation (IKY).

\bibliographystyle{ieeetr}
\bibliography{nfr}
\vskip 0pt plus -1fil
\begin{IEEEbiography}[{\includegraphics[width=1in,height=1.25in,clip,keepaspectratio]{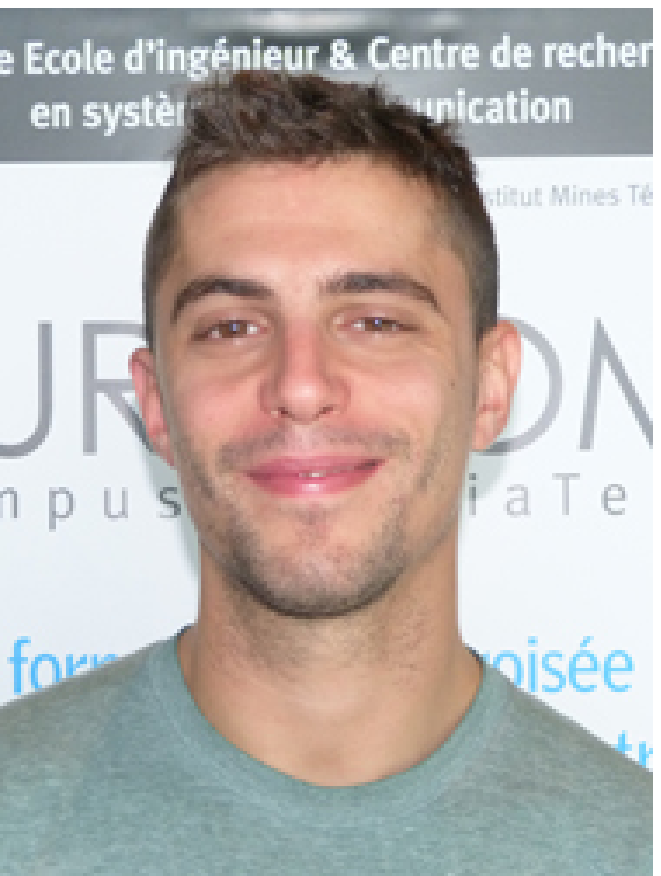}}]{{Theodoros Giannakas}}
received the Diploma in Electrical and Computer Engineering from the University of Patras, Greece, his MSc in Wireless Communications from the University of Southampton, UK and his PhD in Computer Science and Networks from EURECOM, Sophia Antipolis, France, where he is currently working as a post-doctoral researcher. His main research interests includes modeling and optimization, for network friendly recommendation systems and network slicing.
\end{IEEEbiography}
%
%
\begin{IEEEbiography}[{\includegraphics[width=1in,height=1.25in,clip,keepaspectratio]{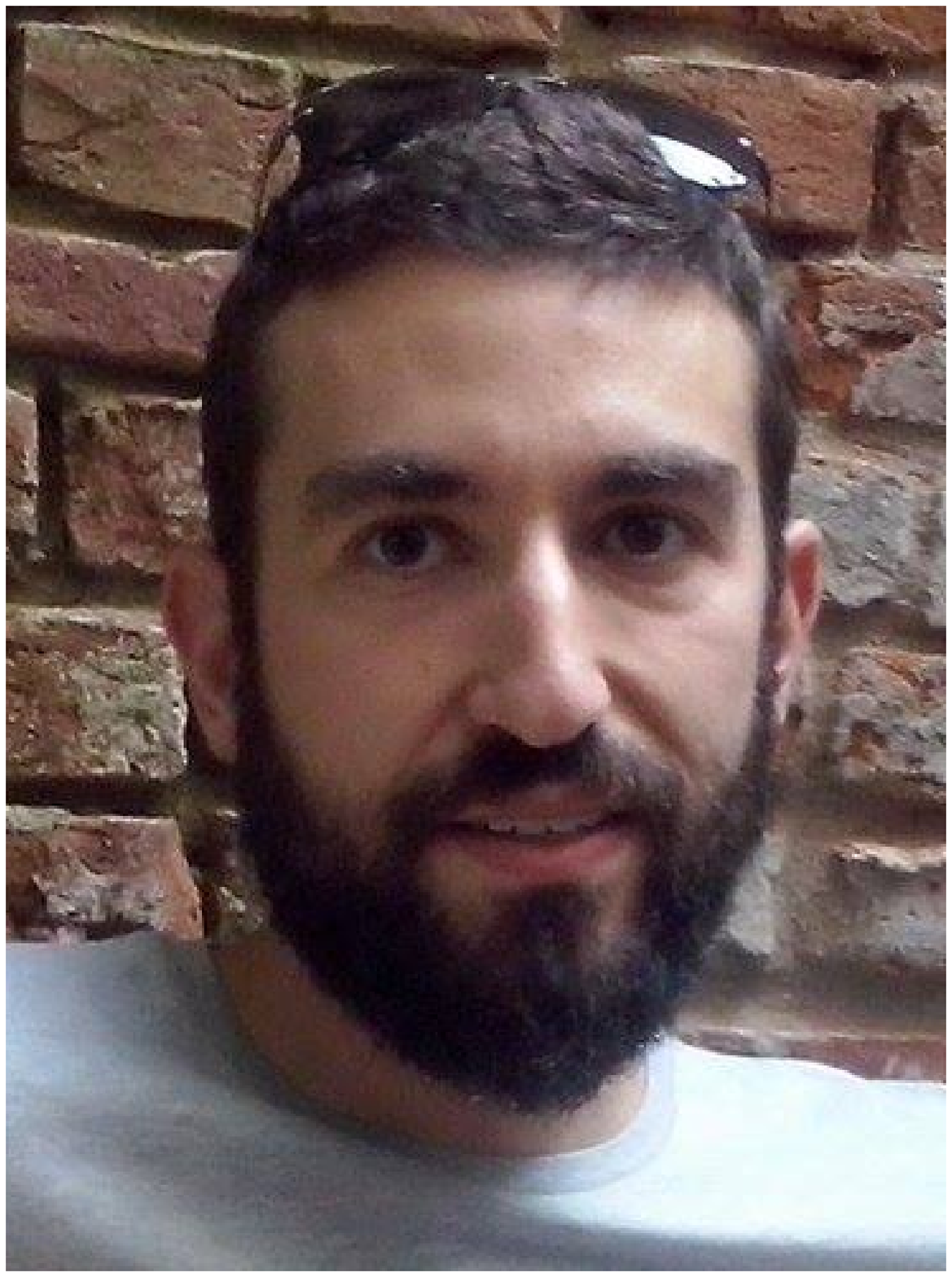}}]{Pavlos Sermpezis} received the Diploma in Electrical and Computer Engineering from the Aristotle University of Thessaloniki, Greece, and a PhD in Computer Science and Networks from EURECOM, Sophia Antipolis, France. He is currently a post-doctoral researcher at Datalab, Department of Informatics, Aristotle University of Thessaloniki, Greece. His main research interests are in modeling and performance analysis for communication networks, and data science.
\end{IEEEbiography}
%
%
\begin{IEEEbiography}[{\includegraphics[width=1in,height=1.25in,clip]{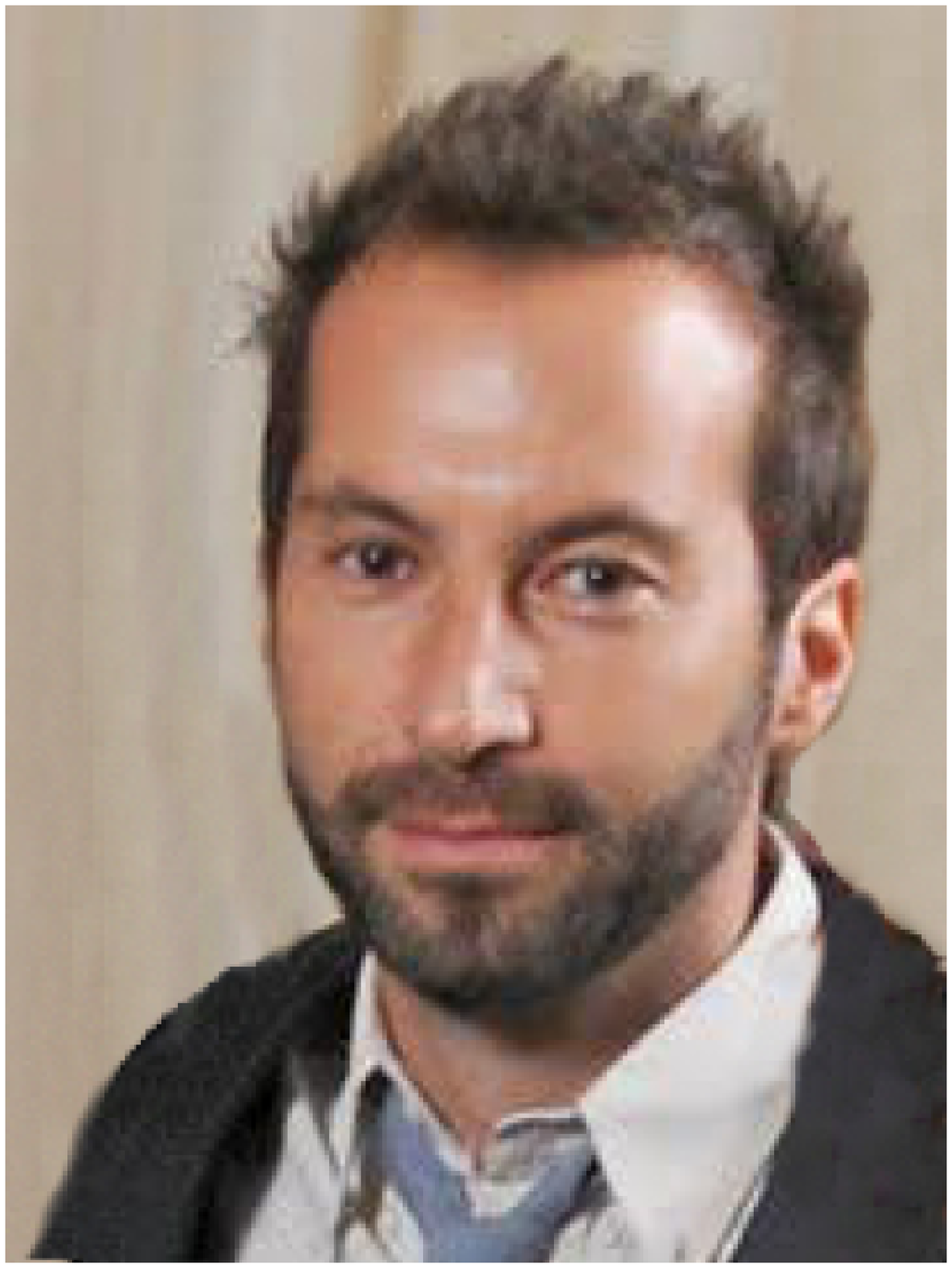}}]{Thrasyvoulos Spyropoulos}
received the Diploma in Electrical and Computer Engineering from the National Technical University of Athens, Greece, and a Ph.D degree in Electrical Engineering from the University of Southern California. He was a post-doctoral researcher at INRIA and then, a senior researcher with the Swiss Federal Institute of Technology (ETH) Zurich. He is currently an Full Professor at EURECOM, Sophia-Antipolis. He is the recipient of the best paper award in IEEE SECON 2008, and IEEE WoWMoM 2012.
\end{IEEEbiography} 

\end{document}